\newtheorem{theorem}{Theorem}
\newtheorem{example}{Example}
\newtheorem{remark}{Remark}
\newtheorem{lemma}{Lemma}
\begin{document}
\title{Degrees of Freedom of Uplink--Downlink Multiantenna Cellular Networks}

\author{Sang-Woon Jeon\IEEEmembership{, Member, IEEE} and Changho Suh\IEEEmembership{, Member, IEEE}
\thanks{This research was funded by the MSIP (Ministry of Science, ICT $\&$ Future Planning), Korea in the ICT R$\&$D Program 2013.}
\thanks{The material in this paper was presented in part at the Information Theory and Applications Workshop (ITA), San Diego, CA, February 2014 and will be presented in part at the IEEE International Symposium on Information Theory (ISIT), Honolulu, HI, June/July 2014.
}
\thanks{S.-W. Jeon is with the Department of Information and Communication Engineering, Andong National University, South Korea (e-mail: swjeon@anu.ac.kr).}
\thanks{C. Suh is with the Department of Electrical Engineering, KAIST,  Daejeon, South Korea (e-mail: chsuh@kaist.ac.kr).}
}
\maketitle

\IEEEpeerreviewmaketitle


\begin{abstract}
An uplink--downlink two-cell cellular network is studied in which the first base station (BS) with $M_1$ antennas receives independent messages from its $N_1$ serving users, while the second BS with $M_2$ antennas transmits independent messages to its $N_2$ serving users. 
That is, the first and second cells operate as uplink and downlink, respectively.
Each user is assumed to have a single antenna.
Under this uplink--downlink setting, the sum degrees of freedom (DoF) is completely characterized as the minimum of $(N_1N_2+\min(M_1,N_1)(N_1-N_2)^++\min(M_2,N_2)(N_2-N_1)^+)/\max(N_1,N_2)$, $M_1+N_2,M_2+N_1$, $\max(M_1,M_2)$, and $\max(N_1,N_2)$, where $a^+$ denotes $\max(0,a)$.
The result demonstrates that, for a broad class of network configurations, operating one of the two cells as uplink and the other cell as downlink can strictly improve the sum DoF compared to the conventional uplink or downlink operation, in which both cells operate as either uplink or downlink. 
The DoF gain from such uplink--downlink operation is further shown to be achievable for heterogeneous cellular networks having hotspots and with delayed channel state information.
\end{abstract}

\begin{IEEEkeywords}
Cellular networks, degrees of freedom, heterogeneous networks, interference alignment, multiantenna techniques, reverse TDD.
\end{IEEEkeywords}

\section{Introduction} \label{sec:intro}
Characterizing the capacity of cellular networks is one of the fundamental problems in network information theory. 
Unfortunately, even for the simplest setting consisting of two base stations (BSs) having one serving user each, which is referred to as the two-user interference channel (IC), capacity is not completely characterized for general channel parameters \cite{Han:87,Etkin:08}.
Exact capacity results being notoriously difficult to obtain, many researchers have recently studied approximate capacity characterizations in the shape of so-called ``degrees of freedom (DoF)'', which captures the behavior of capacity as the signal-to-noise ratio (SNR)  becomes large.

The DoF metric has received a great deal of attention and thoroughly analyzed as multiantenna techniques emerged  \cite{FoschiniGans:98,Telatar:99}, especially in cellular networks \cite{Caire:03,Sriram:03,Viswanath:03,Yu:04,Weingarten:06} because of their potential to increase the DoF of cellular networks. 
Roughly speaking, equipping multiple antennas at the BS and/or users can drastically increase the sum DoF of single-cell cellular networks proportionally with the number of equipped antennas. 

Under multicell environment, Cadambe and Jafar recently made a remarkable progress showing that the optimal sum DoF for the $K$-user IC is given by $K/2$ \cite{Viveck1:08}, which corresponds to the $K$-cell cellular network having one serving user in each cell.
A new interference mitigation paradigm called interference alignment (IA) has been proposed to achieve the sum DoF $K/2$ \cite{Viveck1:08}.
Multicell cellular networks having multiple serving users in each cell has been studied in \cite{Suh:08,Suh:11} under both uplink and downlink operation, each of which is called interfering multiple access channel (IMAC) \cite{Suh:08} and  interfering broadcast channel (IBC) \cite{Suh:08,Suh:11}.
It was shown in \cite{Suh:08,Suh:11} that multiple users in each cell is beneficial for increasing the sum DoF of IMAC and IBC by  utilizing multiple users in each cell for IA.

As a natural extension, integrating multiantenna techniques and IA techniques has been recently studied to boost the DoF of multicell multiantenna cellular networks. 
The DoF of the $K$-user IC having $M$ antennas at each transmitter and $N$ antennas at each receiver has been analyzed in \cite{Tiangao:10}.
More recently, the IMAC and IBC models have been extended to multiantenna BS and/or multiantenna users, see \cite{Kim:11,Hwang:12,Shin:11,Shin:13,Liu2:13,Sridharan:13} and the references therein. 

\begin{figure}[t!]
\begin{center}
\includegraphics[scale=0.7]{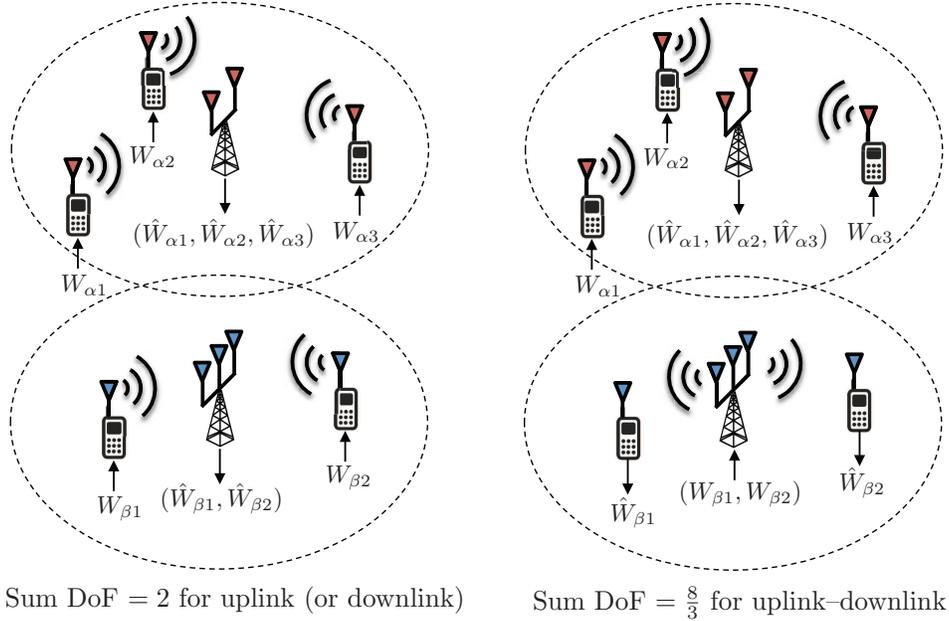}
\caption{Sum DoF of the example network.}
\label{figs:motivating_ex}
\end{center}
\end{figure}

\subsection{Motivating Example}
In this paper, we study a multiantenna two-cell cellular network in which the first and second cells operate as uplink and downlink respectively. For better understanding on the motivation of the paper, we introduce a simple two-cell cellular network in Fig. \ref{figs:motivating_ex}. The first cell consists of a BS having two antennas and three users but the second cell consists of a BS having three antennas and two users.
Let us consider how to operate or coordinate this example network in order to maximize its sum DoF.
As we will explain later, if both cells operate as the conventional uplink or downlink, then the sum DoF is limited by two from the DoF result of the two-user multiple input multiple output (MIMO) IC in \cite{Jafar:07}.
Hence, activating one of the two cells can trivially achieve the optimal sum DoF for these cases.
Notice that the another option is to operate the first cell as uplink and the second cell as downlink or vice versa.
For this case, the two-user MIMO IC upper bound in \cite{Jafar:07} is given by three, suggesting that it might be possible to achieve more than two sum DoF.
But it is at least impossible to achieve more than two DoF by simply activating one of two cells.
We will show that for this case the optimal sum DoF is given by $8/3$, strictly greater than that achievable by the conventional uplink or downlink operation.

The previous work on the DoF of multiantenna cellular networks, however, inherently assumes either uplink or downlink  so that it cannot capture the possibility of such DoF improvement from the uplink--downlink operation.
Therefore, the primary aim of this paper is to figure out whether operating as either the conventional uplink or downlink is optimal or not in terms of the DoF for multicell multiantenna cellular networks. 
We focus on two-cell networks in which the first cell, consisting of a BS with $M_1$ antennas and $N_1$ users, operates as uplink and the second cell, constisting of a BS with $M_2$ antennas and $N_2$ users, operates as downlink.
We completely characterize the sum DoF and the result demonstrates that, depending on the network configuration, uplink--downlink operation is beneficial for increasing the sum DoF compared to the conventional  uplink or downlink operation.

\subsection{Previous Work}

In seminal work \cite{Viveck1:08}, Cadambe and Jafar showed that the optimal sum DoF of the $K$-user IC with time-varying channel coefficients is given by $\frac{K}{2}$, achievable by signal space IA.
The concept of this signal space alignment has been successfully adapted to various network environments, e.g., see \cite{Maddah-Ali:08,Viveck2:09,Tiangao:10,Suh:11, Suh:08,Viveck1:09,Annapureddy:11,Ke:12} and the references therein.
It was shown in \cite{Motahari:09, Motahari2:09} that IA can also be attained on fixed (not time-varying) channel coefficients.  
A different strategy of IA was developed in \cite{Nazer11:09,Jeon:13} called ergodic IA, which makes interference aligned in the finite SNR regime and, as a result, provides significant rate improvement compared with the conventional time-sharing strategy in the finite SNR regime \cite{Nazer11:09,Jeon2:14}.
The DoF of $K$-user {MIMO} IC has been considered in \cite{Tiangao:10,Yetis:10,Wang:12}.

For multisource multihop networks, interference can not only be aligned, but it can be cancelled through multiple paths, which is referred to as interference neutralization \cite{Rankov:07}.
The work \cite{Tiangao:12} has exploited IA to neutralize interference at final destinations, which is referred to as aligned interference neutralization, and showed that the optimal sum DoF two is achievable for $2$-user $2$-hop networks with $2$ relays.
Similar concept of ergodic IA has been proposed for interference neutralization in \cite{Jeon2:11} showing that ergodic interference neutralization achieves the optimal sum DoF of $K$-user $K$-hop isotropic fading networks with $K$ relays in each layer.
Recently, it has been shown in \cite{Shomorony:13} that the optimal sum DoF of the $K$-user $2$-hop network with $K$ relays is given by $K$.

The DoF of cellular networks has been first studied by Suh and Tse for both uplink and downlink environments, called IMAC and IBC respectively \cite{Suh:08,Suh:11}. 
It was shown that, for two-cell networks having $K$ users in each cell, the sum DoF $\frac{2K}{K+1}$ is achievable for both uplink and downlink. Hence, multiple users at each cell are beneficial for improving the DoF of cellular networks.
The IMAC and IBC models have been extended to have multiple antennas at each BS and/or user \cite{Kim:11,Zhuang:11,Pantisano:11,Guillaud:11,Shin:11,Hwang:12,Liu2:13,Shin:13,Liu:13,Sridharan:13,Park:12,Ntranos:14,Shin3:12,Ayoughi:13}.
For multiantenna IMAC and IBC, it was shown that there exists in general a trade-off between two approaches: zero-forcing by using multiple antennas and asymptotic IA by treating each antenna as a separate user \cite{Wang:12,Park:12, Liu2:13,Sridharan:13}.

Recently, reverse time division duplex (TDD), i.e.,  operating a subset of cells as uplink and the rest of the cells as downlink, has been actively studied in heterogeneous cellular networks, consisting of macro BSs with larger number of antennas and micro BSs with smaller number of antennas \cite{Ghosh:12,Kountouris:13,Hoydis:13,Hosseini:13,Andrews:13,Adhikary:14}. Under various practical scenarios, potential benefits of reverse TDD have been analyzed in the context of coverage \cite{Kountouris:13}, area spectral efficiency \cite{Kountouris:13,Hoydis:13}, throughput \cite{Hosseini:13,Adhikary:14}, and so on.

\subsection{Paper Organization}
The rest of this paper is organized as follows.
In Section \ref{sec:prob_formulation}, we introduce the uplink--downlink multiantenna two-cell cellular network model and define its sum DoF. In Section \ref{sec:main_result}, we first state the main result of this paper, the sum DoF of the uplink--downlink multiantenna two-cell cellular network. The proof of the main result is presented in Section \ref{sec:achievability}.
We then discuss some related problems regarding the main result in Section \ref{sec:discussion} and finally conclude in Section \ref{sec:conclusion}.

\section{Problem Formulation} \label{sec:prob_formulation}
We will use boldface lowercase letters to denote vectors and boldface uppercase letters to denote matrices.
Throughout the paper, $[1:n]$ denotes $\{1,2,\cdots,n\}$, $\mathbf{0}_n$ denotes the $n\times 1$ all-zero vector, and $\mathbf{I}_n$ denotes the $n\times n$ identity matrix. 
For a real value $a$, $a^+$ denotes $\max(0,a)$.
For a set of vectors $\{\mathbf{a}_i\}$, $\operatorname{span}(\{\mathbf{a}_i\})$ denotes the vector space spanned by the vectors in $\{\mathbf{a}_i\}$.
For a vector $\mathbf{b}$, $\mathbf{b}\perp\operatorname{span}(\{\mathbf{a}_i\})$ means that $\mathbf{b}$ is orthogonal with all vectors in $\operatorname{span}(\{\mathbf{a}_i\})$.
For a matrix $\mathbf{A}$, $\mathbf{A}^{\dagger}$ denotes the transpose of $\mathbf{A}$. 
For a set of matrices $\{\mathbf{A}_i\}$, $\operatorname{diag}(\mathbf{A}_1,\cdots, \mathbf{A}_n)$ denotes the block diagonal matrix consisting of $\{\mathbf{A}_i\}$.

\begin{figure}[t!]
\begin{center}
\includegraphics[scale=0.7]{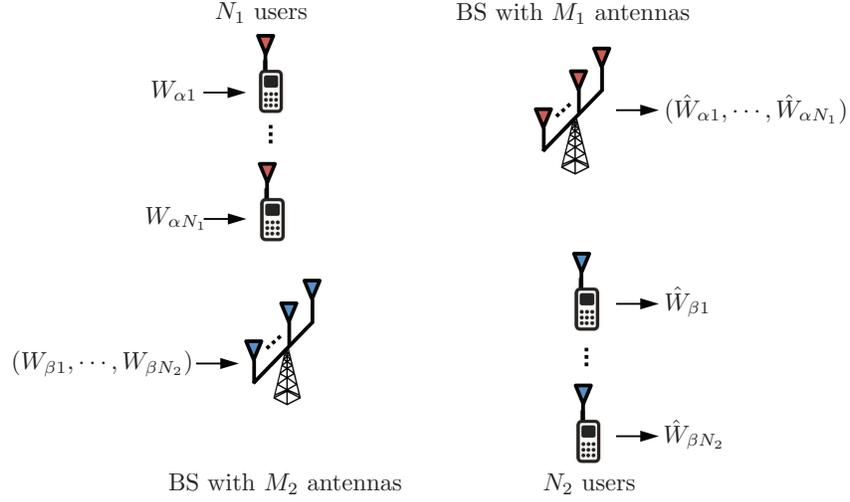}
\caption{Uplink--downlink multiantenna two-cell cellular networks, where the transmitters are located in the left hand side by convention.}
\label{figs:general_i_bc_mac}
\end{center}
\end{figure}

\subsection{Uplink--Downlink Multiantenna Two-Cell Cellular Networks}
Consider a multiantenna two-cell cellular network depicted in Fig. \ref{figs:general_i_bc_mac} in which the first cell (cell $\alpha$) operates as uplink and the second cell (cell $\beta$) operates as downlink.
Specifically, the BS in cell $\alpha$ (BS $\alpha$) equipped with $M_1$ antennas wishes to receive an independent message $W_{\alpha i}$ from the $i$th user in the same cell (user $(\alpha,i)$) for all $i\in[1:N_1]$. 
On the other hand, the BS in cell $\beta$ (BS $\beta$) equipped with $M_2$ antennas wishes to send an independent message $W_{\beta j}$ to the $j$th user in the same cell  (user $(\beta,j)$) for all $j\in[1:N_2]$.
Each user is assumed to have a single antenna.

The $M_1\times 1$ received signal vector of BS $\alpha$ at time $t$ is given by
\begin{equation} \label{eq:in_out_1}
\mathbf{y}_{\alpha}[t]=\sum_{i=1}^{N_1}\mathbf{h}_{\alpha i}[t]x_{\alpha i}[t]+\mathbf{G}_{\alpha}[t]\mathbf{x}_{\beta}[t]+\mathbf{z}_{\alpha}[t]
\end{equation}
and the received signal of user $(\beta,j)$ at time $t$ is given by
\begin{equation} \label{eq:in_out_2}
y_{\beta j}[t]=\mathbf{h}_{\beta j}[t]\mathbf{x}_{\beta}[t]+\sum_{i=1}^{N_1}g_{\beta ji}[t]x_{\alpha i}[t]+z_{\beta j}[t],
\end{equation}
where $j\in[1:N_2]$.
Here $\mathbf{h}_{\alpha i}[t]\in\mathbb{R}^{M_1\times 1}$ is the channel vector from user $(\alpha,i)$ to BS $\alpha$, $\mathbf{G}_{\alpha}[t]\in\mathbb{R}^{M_1\times M_2}$ is the channel matrix from BS $\beta$ to BS $\alpha$, $\mathbf{h}_{\beta j}[t]\in\mathbb{R}^{1\times M_2}$ is the channel vector from BS $\beta$ to user $(\beta,j)$, and $g_{\beta ji}[t]\in\mathbb{R}$ is the scalar channel from user $(\alpha,i)$ to user $(\beta,j)$.
Also, $x_{\alpha i}[t]\in \mathbb{R}$ is the transmit signal of user $(\alpha,i)$ and  $\mathbf{x}_{\beta}[t]\in\mathbb{R}^{M_2\times 1}$ is the transmit signal vector of cell $\beta$.
The additive noise vector at cell $\alpha$, denoted by  $\mathbf{z}_{\alpha}[t]\in \mathbb{R}^{M_1\times 1}$, is assumed to follow 
$\mathcal{N}(\mathbf{0}_{M_1},\mathbf{I}_{M_1})$ .
Similarly, the additive noise at user $(\beta,j)$, denoted by $z_{\beta j}[t]$, is assumed to follow $\mathcal{N}(0,1)$.
Each user in cell $\alpha $ and BS $\beta$ should satisfy the average power constraint $P$, i.e., $E\big(x^2_{\alpha i}[t]\big)\le P$ for all $i\in[1:N_1]$ and $E\left(\|\mathbf{x}_{\beta}[t]\|^2\right)\le P$, where $\|\cdot\|$ denotes the norm of a vector.

We assume that all channel coefficients are independent and identically distributed (i.i.d.) drawn from a continuous distribution and vary independently over each time slot. Global channel state information (CSI) is assumed to be available at each user and BS. 

\subsection{Degrees of Freedom}
Let $W_{\alpha i}$ and $W_{\beta j}$ be chosen uniformly at random from $[1:2^{nR_{\alpha i}}]$ and $[1:2^{nR_{\beta j}}]$ respectively, where $i\in[1:N_1]$ and $j\in[1:N_2]$.
A rate tuple $(R_{\alpha 1},\cdots,R_{\alpha N_1},R_{\beta 1},\cdots,R_{\beta N_2})$ is said to be achievable if there exists a sequence of $(2^{nR_{\alpha 1}},\cdots,2^{nR_{\alpha N_1}},2^{nR_{\beta 1}},\cdots,2^{nR_{\beta N_2}};n)$ codes such that $\Pr(\hat{W}_{\alpha i}\neq W_{\alpha i})\to 0$ and $\Pr(\hat{W}_{\beta j}\neq W_{\beta j})\to 0$ as $n$ increases for all $i\in[1:N_1]$ and $j\in[1:N_2]$.
Then the achievable sum DoF is given by
\begin{equation}
\lim_{P\to\infty}\frac{\sum_{i=1}^{N_1} R_{\alpha i}+\sum_{j=1}^{N_2} R_{\beta j}}{\frac{1}{2}\log P}.
\end{equation}
For notational convenience, denote the  maximum achievable sum DoF by $d_{\Sigma}$.
In the rest of the paper, we will characterize $d_{\Sigma}$, which is given by a function of $M_1$, $M_2$, $N_1$, and $N_2$.

\section{Main Result} \label{sec:main_result}
In this section, we state our main result.
We completely characterize $d_{\Sigma}$ in the following theorem.

\begin{theorem} \label{thm:achievable_DoF}
For the uplink--downlink multiantenna two-cell cellular network, 
\begin{align} \label{eqn:achievable_DoF}
d_{\Sigma}=\min\Bigg\{&\frac{N_1N_2+\min(M_1,N_1)(N_1-N_2)^+ +\min(M_2,N_2)(N_2-N_1)^+}{\max (N_1,N_2)},\nonumber\\
&M_1+N_2,M_2+N_1,\max(M_1,M_2),\max(N_1,N_2)\Bigg\}.
\end{align}
\end{theorem}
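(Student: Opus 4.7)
The proof has two directions: a converse showing $d_\Sigma \le \min\{\cdots\}$ and an achievability scheme matching each term in the $\min$.

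For the converse I would prove the five terms separately. The bounds $M_1+N_2$ and $M_2+N_1$ come from a standard MIMO cut-set: granting full cooperation within the combined transmit side $\{\text{cell-}\alpha \text{ users}, \text{BS }\beta\}$ (with $N_1+M_2$ antennas) and within the combined receive side $\{\text{BS }\alpha, \text{cell-}\beta \text{ users}\}$ (with $M_1+N_2$ antennas) reduces the network to a point-to-point MIMO link of DoF $\min(M_1+N_2, N_1+M_2)$. The bound $\max(M_1,M_2)$, together with the (redundant but clarifying) $\max(N_1,N_2)$, captures the tradeoff that each side's antennas must be shared between carrying its own streams and absorbing cross-cell interference; I would obtain it by a partial-cooperation argument on a single side. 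The fractional first bound is the delicate one; the plan is to hand a carefully chosen genie of cell-$\alpha$ transmit signals to the cell-$\beta$ users, upgrading their single-antenna receptions into virtual multi-antenna receptions, and then combine Fano's inequality with an averaging argument across the $\max(N_1,N_2)$ receivers to extract the stated fraction.

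For achievability I would assume WLOG $N_1\ge N_2$ (so $(N_2-N_1)^+=0$) and split into regimes according to which bound in the $\min$ is active. The scheme in each regime combines four ingredients: (i) transmit beamforming at BS $\beta$ to zero-force at chosen cell-$\beta$ users and/or at BS $\alpha$; (ii) receive beamforming at BS $\alpha$ to cancel BS-to-BS interference; (iii) alignment of the cell-$\alpha$ user symbols into a one-dimensional footprint at each cell-$\beta$ user; and (iv) an $N_1$-slot symbol extension realizing the fractional DoF as an integer number of streams per block. When $M_1\ge N_1$, BS $\alpha$ can fully zero-force BS $\beta$'s interference via receive processing alone, so cell $\alpha$ runs as a clean MAC. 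When $M_2\ge N_1+N_2$, BS $\beta$ has enough antennas to simultaneously zero-force BS $\alpha$ and its own $N_2$ users. The intermediate regimes are handled by combining these two extreme schemes with a dimension count matching the corresponding active bound.

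The main obstacle is the joint construction achieving the fractional bound in the regime where it is tight. There, the cell-$\alpha$ precoders, chosen jointly across the $N_1$-slot block, must simultaneously (a) align into a one-dimensional footprint at each of the $N_2$ cell-$\beta$ users, (b) remain linearly independent at BS $\alpha$ after the $M_2$-dimensional BS-$\beta$ interference subspace has been projected out, and (c) interlock with BS $\beta$'s beamformers so that the promised $N_1N_2+\min(M_1,N_1)(N_1-N_2)$ total streams per block remain decodable. Verifying that a generic channel realization makes the resulting decoding matrix full rank reduces to a Schwartz--Zippel-style polynomial non-vanishing argument, and arranging the dimension count to match the fractional expression exactly is where most of the algebraic effort lies.
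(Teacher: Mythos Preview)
Your achievability plan has a genuine gap. You propose an $N_1$-slot symbol extension and require ``alignment of the cell-$\alpha$ user symbols into a one-dimensional footprint at each cell-$\beta$ user.'' With a finite block of length $N_1$, the alignment condition at user $(\beta,j)$ forces $\bar{\mathbf{v}}_{\alpha i}=\bar{\mathbf{G}}_{\beta j i}^{-1}\bar{\mathbf{G}}_{\beta j 1}\bar{\mathbf{v}}_{\alpha 1}$ (the $\bar{\mathbf{G}}_{\beta j i}$ are diagonal), and this must hold simultaneously for all $j\in[1:N_2]$. For $N_2\ge 2$ these equations are generically inconsistent, so exact one-dimensional alignment at every cell-$\beta$ user is impossible with any fixed finite extension. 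The paper addresses precisely this point: the $N_1$-slot scheme is only the motivating special case $N_2=1$, and for general $N_2$ one must use the Cadambe--Jafar asymptotic construction with block length $d=\frac{1}{\lambda_1}(T+1)^{N_1N_2}$, under which the interference footprint at each cell-$\beta$ user occupies at most $(T+1)^{N_1N_2}$ of the $d$ dimensions and the target DoF is attained only as $T\to\infty$. Your Schwartz--Zippel rank check is fine once the right beamformers exist, but the beamformers you describe do not exist in the general case.

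On the converse side your plan is workable but heavier than needed. The four terms $M_1+N_2$, $M_2+N_1$, $\max(M_1,M_2)$, $\max(N_1,N_2)$ all fall out in one stroke from the two-user MIMO IC bound of Jafar--Fakhereddin after letting the $N_1$ users cooperate and the $N_2$ users cooperate; no separate cut-set or partial-cooperation argument is required. For the fractional term the paper does not use a genie at all: it simply removes all users except $(\alpha,i)$ and $(\beta,j)$, applies the same MIMO IC bound to the resulting $1\times M_1$ / $M_2\times 1$ IC to get $d_{\alpha i}+d_{\beta j}\le 1$, sums this over all $N_1N_2$ pairs, and then adds the trivial single-cell bounds weighted by $(N_1-N_2)^+$ and $(N_2-N_1)^+$. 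That three-line averaging argument yields the fractional expression directly and is considerably simpler than the genie-plus-Fano route you sketch.
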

\begin{proof}
We refer to Section \ref{sec:achievability} for the proof.
\end{proof}

For better understanding of the contribution of the main result, we present simple existing upper and lower bounds on $d_{\Sigma}$.
Obviously, $d_{\Sigma}$ is upper bounded by the sum DoF of the two-user MIMO IC having $N_1$ transmit antennas and $M_2$ received antennas for the first transmission pair and $M_2$ transmit antennas and $N_2$ received antennas for the second transmission pair. Hence, from the result in \cite{Jafar:07},
\begin{align} \label{eq:two_user_IC_bound}
d_{\Sigma}\leq\min\{M_1+N_2,M_2+N_1,\max(M_1,M_2),\max(N_1,N_2)\}.
\end{align}
Note that the first DoF constraint in \eqref{eqn:achievable_DoF} do not appear in \eqref{eq:two_user_IC_bound}, which can be interpreted as the DoF degradation due to distributed processing at each user. On the other hand, if only one of the two cells is activated, we have
\begin{align} \label{eq:single_cell_bound}
d_{\Sigma}\geq \max(\min(M_1,N_1),\min(M_1,N_2)).
\end{align}

In the following, we first consider symmetric cell configurations in which either the number of antennas at each BS or the number of users in each cell is the same. For this case, $d_{\Sigma}$ is trivially characterized from \eqref{eq:two_user_IC_bound} and \eqref{eq:single_cell_bound} without using Theorem \ref{thm:achievable_DoF}.

\begin{example}[Symmetric Cell Configurations] \label{ex:symmetric_case}
First consider the case where the number of antennas at each BS is the same, i.e., $M_1=M_2:=M$. 
Then the existing upper and lower bounds in \eqref{eq:two_user_IC_bound} and \eqref{eq:single_cell_bound} coincide showing that $d_{\Sigma}=\min(M,\max(N_1,N_2))$ for this case. 
The same is true for the case where the number of users in each cell is the same, i.e., $N_1=N_2:=N$. Then $d_{\Sigma}=\min(\max(M_1,M_2),N)$. \hfill$\lozenge$
\end{example}

For a general (asymmetric) cell configuration, however, the upper and lower bounds in \eqref{eq:two_user_IC_bound} and \eqref{eq:single_cell_bound} is not tight as demonstrated in the following example.

\begin{figure}[t!]
\begin{center}
\includegraphics[scale=0.5]{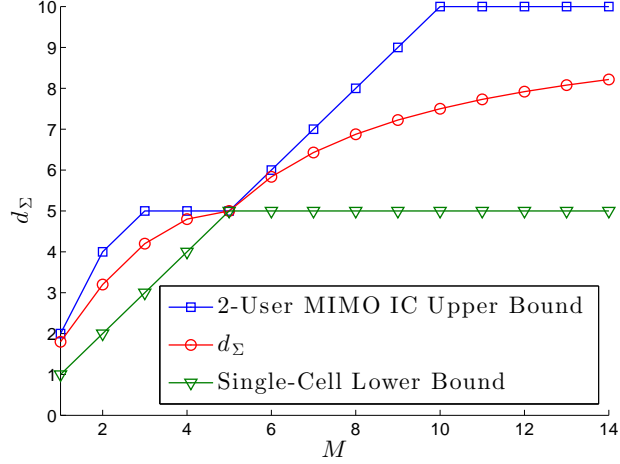}
\end{center}
\vspace{-0.15in}
\caption{$d_{\Sigma}$ in Theorem \ref{thm:achievable_DoF} with respect to $M$ when $N=5$, where $M_1=N_2=M$ and $M_2=N_1=N$.}
\label{figs:DoF_N6}
\vspace{-0.1in}
\end{figure}

\begin{example}[Asymmetric Cell Configurations] \label{ex:asymmetric_case}
Consider the asymmetric cell configuration in which $M_1=N_2:=M$ and $M_2=N_1:=N$.
Then Theorem \ref{thm:achievable_DoF} shows that 
\begin{align} \label{eq:DoF_symmetric_case}
d_{\Sigma}=\begin{cases}
\frac{M(2N-M)}{N} &\mbox{ if } M\le N,\\
\frac{N(2M-N)}{M} &\mbox{ if } M> N.
\end{cases}
\end{align}
Figure \ref{figs:DoF_N6} plots \eqref{eq:DoF_symmetric_case} with respect to $M$ when $N=5$.
For comparison, we also plot the two-user MIMO IC upper bound \eqref{eq:two_user_IC_bound} and the single-cell lower bound \eqref{eq:single_cell_bound}, each of which is given by $\min\{2M,2N,\max(M,N)\}$ and $\min(M,N)$ respectively.
Note that \eqref{eq:DoF_symmetric_case} is not trivially achievable and, moreover, the two-user MIMO IC upper bound is not tight for all $M$ and $N$ satisfying $M\neq N$. \hfill$\lozenge$
\end{example}

The above two examples have led to a fundamental question: \emph{Which class of cell configurations can  uplink--downlink operation improve the sum DoF of cellular networks compared to the conventional uplink or downlink operation (including the single-cell operation)?} 
That is, the question is about the \emph{cell coordination problem} when a network is able to choose the operation mode of each cell to maximize its sum DoF.
For a broad class of heterogeneous cell configurations, uplink--downlink operation strictly improves the sum DoF compared to the case where the entire cells operate either uplink or downlink.
We briefly address this question in the following remark based on the cell configuration assumed in Example \ref{ex:symmetric_case}.
The DoF gain from uplink--downlink operation will be discussed in more details over a general four-parameter space $(M_1,M_2,N_1,N_2)$ in Section \ref{subsec:dof_gain}.
We further address the above question for cellular networks having hotspots in Section \ref{subsec:dof_hetnet}, which is a certain type of heterogeneous cellular networks.

\begin{figure}[t!]
\begin{center}
\includegraphics[scale=0.7]{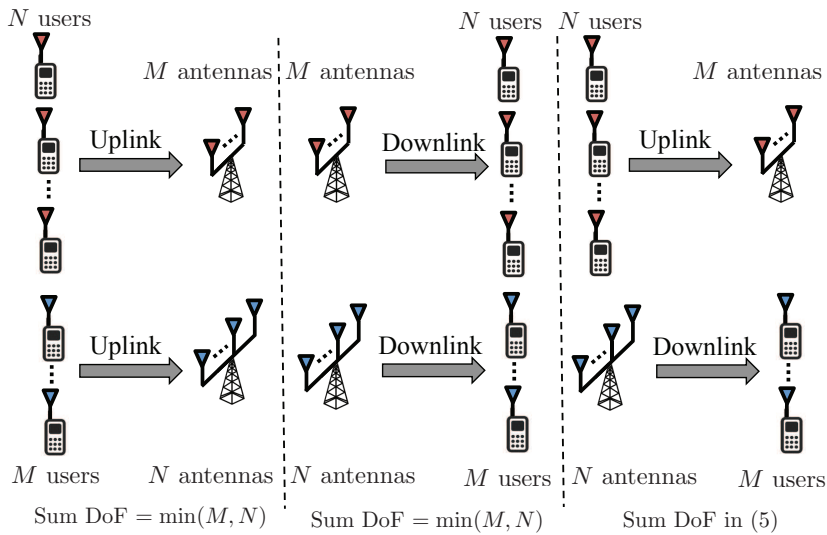}
\end{center}
\vspace{-0.15in}
\caption{Sum DoF achievable by uplink, downlink, and uplink--downlink operation for $M_1=N_2=M$ and $M_2=N_1=N$.}
\label{figs:DoF_gain}
\vspace{-0.1in}
\end{figure}

\begin{remark}[Dof Gain From Uplink--Downlink Operation] \label{re:uplink_downlink}
Theorem \ref{thm:achievable_DoF} demonstrates that, depending on the network configuration, operating one cell as unlink and the other cell as downlink improves the sum DoF compared to the conventional operation in which the entire cells operate as either uplink or downlink. 
For instance, consider the cell coordination problem for the two-cell heterogeneous cellular network in which its configuration is given as in Fig. \ref{figs:general_i_bc_mac}. That is, the operation mode of each cell can be coordinated to maximize the sum DoF.
As shown in Fig. \ref{figs:DoF_gain}, if we operate both cells either uplink or downlink, then the sum DoF is upper bounded by the single-cell lower bound, i.e., $\min(M,N)$.
On the other hand, uplink--downlink operation achieves \eqref{eq:DoF_symmetric_case}, which is strictly larger than $\min(M,N)$ for all  $M$ and $N$ satisfying $M\neq N$.
Furthermore, the DoF gain from uplink--downlink operation becomes significant as the difference between $M$ and $N$ increases. Specifically, $d_{\Sigma}\to 2M$ as $N\to\infty$ in \eqref{eq:DoF_symmetric_case}. Whereas the sum DoF achievable by the conventional uplink or downlink operation is limited by $M$ even as $N\to\infty$. \hfill$\lozenge$
\end{remark}

The following remark states an interesting observation captured by Theorem \ref{thm:achievable_DoF}.
It is about the impact of user cooperation on the two-cell IMAC or IBC, which corresponds to the model assuming the conventional uplink or downlink and, thus, is not related to uplink--downlink operation.

\begin{figure}[t!]
\begin{center}
\includegraphics[scale=0.7]{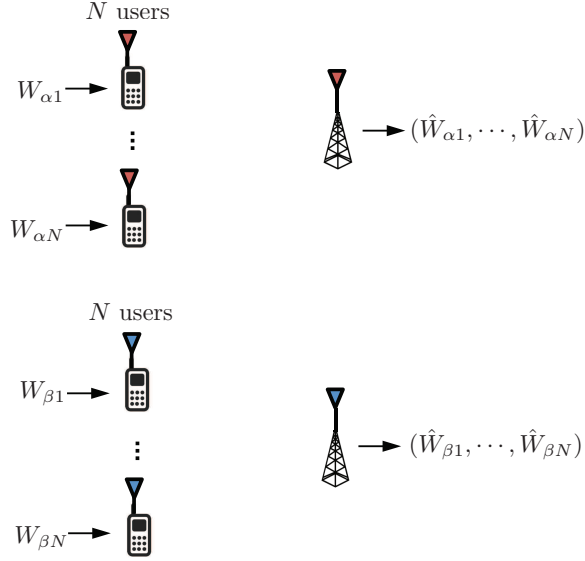}
\end{center}
\vspace{-0.15in}
\caption{Two-cell IMAC in which each BS and user is equipped with a single antenna.}
\label{figs:user_cooperation}
\vspace{-0.1in}
\end{figure}

\begin{remark}[User Cooperation]
Consider  the two-cell IMAC in Fig. \ref{figs:user_cooperation} in which $N$ users in each cell wish  to transmit independent messages to their BS.
Suh and Tse showed that the sum DoF $\frac{2N}{N+1}$ is achievable in this case, which converges to the interference-free sum DoF of $2$ as $N$ increases.
Obviously, if the users within each cell can cooperate with each other, then the interference-free sum DoF is achievable if $N\geq 2$. Hence the number of users in each cell does not have to go to infinity.
Now suppose that the users in the second cell can cooperate.  
From Theorem \ref{thm:achievable_DoF}, $d_{\Sigma}=\frac{2N_1-1}{N_1}$ when $M_2=2$ and $M_1=N_2=1$, which shows $d_{\Sigma}\to 2$ as $N_1\to \infty$.
Hence this result shows that, even though user cooperation is allowed only for the second cell, cooperation between two users is enough to achieve $d_{\Sigma}\to 2$ if the number of users in the first cell tends to infinity. 
In this sense, one-side user cooperation is still powerful for boosting DoF.
The same argument holds for the two-cell IBC. \hfill$\lozenge$
\end{remark}

\section{Proof of Theorem \ref{thm:achievable_DoF}} \label{sec:achievability}
In this section, we prove Therorem \ref{thm:achievable_DoF}. 
We first provide the converse proof  in Section \ref{subsec:converse} and then provide the achievability proof in Sections \ref{subsec:main_idea} to \ref{subsec:scheme2}.
For better understanding of the achievability idea, we first establish it  based on a simple example network in Section \ref{subsec:main_idea}. 
We then introduce two proposed schemes for a general network and analyze their achievable sum DoF in Sections \ref{subsec:achievable_dof} to \ref{subsec:scheme2}. 

\subsection{Converse} \label{subsec:converse}
In this subsection, we prove the converse of Theorem \ref{thm:achievable_DoF}.
If full cooperation is allowed within the $N_1$ users in cell $\alpha$ and within the $N_2$ users in cell $\beta$, then the network becomes the two-user MIMO IC. Hence, 
$d_{\Sigma}\leq \min\{M_1+N_2,M_2+N_1,\max(M_1,M_2),\max(N_1,N_2)\}$ from the result in \cite{Jafar:07}.
Then the remaining part is to prove the first $d_{\Sigma}$ constraint in \eqref{eqn:achievable_DoF}.

Denote $d_{\alpha i}$, $i\in[1:N_1]$ by an achievable DoF of user $(\alpha,i)$ and $d_{\beta j}$, $j\in[1:N_2]$, by an achievable DoF of user $(\beta,j)$. 
Let us then remove all the users in cell $\alpha$ except user $(\alpha, i)$ and all the users in cell $\beta$ except user $(\beta, j)$.
Obviously, removing other users cannot degrade $d_{\alpha i}+d_{\beta j}$.
Therefore, again from \eqref{eqn:achievable_DoF}, 
\begin{align} \label{eq:converse_lemma}
d_{\alpha i}+d_{\beta j}\leq 1.
\end{align}
Then, summing \eqref{eq:converse_lemma}  for all $i\in[1:N_1]$ and $j\in[1:N_2]$ provides 
\begin{align} \label{eq:converse_bound1} 
N_2\sum_{i=1}^{N_1}d_{\alpha i}+N_1 \sum_{j=1}^{N_2}d_{\beta j}\leq N_1 N_2.
\end{align}
Obviously, 
\begin{align}
(N_1-N_2)^+ \sum_{i=1}^{N_1}d_{\alpha i}\leq (N_1-N_2)^+ \min(M_1,N_1).\label{eq:converse_bound2}\\
(N_2-N_1)^+ \sum_{j=1}^{N_2}d_{\beta j}\leq (N_2-N_1)^+ \min(M_2,N_2),\label{eq:converse_bound3}
\end{align}
Finally summing \eqref{eq:converse_bound1} to \eqref{eq:converse_bound3} yields 
\begin{align}
\sum_{i=1}^{N_1} d_{\alpha i}+\sum_{j=1}^{N_2} d_{\beta j}\leq\frac{N_1N_2+\min(M_1,N_1)(N_1-N_2)^+ +\min(M_2,N_2)(N_2-N_1)^+}{\max (N_1,N_2)}.
\end{align}
Therefore, $d_{\Sigma}$ is upper bounded by \eqref{eqn:achievable_DoF}, which completes the converse proof.

\begin{figure}[t!]
\begin{center}
\includegraphics[scale=0.7]{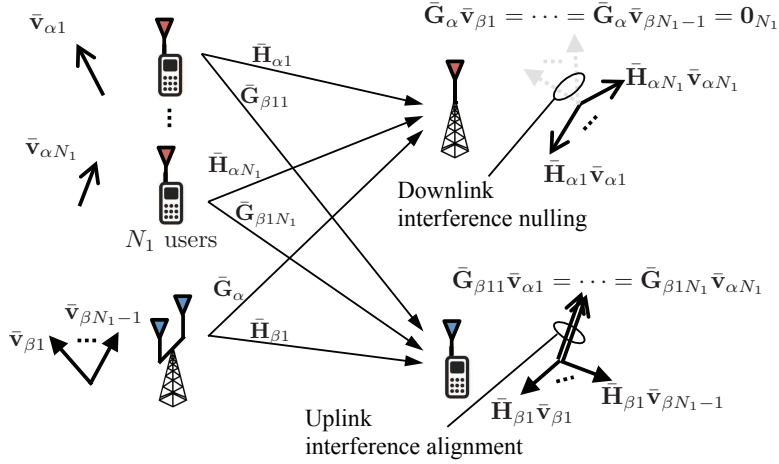}
\end{center}
\vspace{-0.15in}
\caption{$d_{\Sigma}$-achievable transmit beamforming for $M_2=2$, $M_1=N_2=1$.}
\label{figs:simple_case1}
\vspace{-0.1in}
\end{figure}

\subsection{Main Idea for Achievability} \label{subsec:main_idea}
We briefly explain the achievability idea here assuming that $M_2=2$, $M_1=N_2=1$. 
Figure \ref{figs:simple_case1} illustrates how to achieve $d_{\Sigma}=\frac{2N_1-1}{N_1}$ for this case.
Communication takes place via transmit beamforming over a block of $N_1$ time slots. 
Denote 
$\bar{\mathbf{H}}_{\alpha i}=\operatorname{diag}(\mathbf{h}_{\alpha i}[1],\cdots,\mathbf{h}_{\alpha i}[N_1])\in \mathbb{R}^{N_1\times N_1}$,  $\bar{\mathbf{H}}_{\beta 1}=\operatorname{diag}(\mathbf{h}_{\beta 1}[1],\cdots,\mathbf{h}_{\beta 1}[N_1])\in \mathbb{R}^{N_1\times 2N_1}$, $\bar{\mathbf{G}}_{\alpha}=\operatorname{diag}(\mathbf{G}_{\alpha}[1],\cdots,\mathbf{G}_{\alpha}[N_1])\in \mathbb{R}^{N_1\times 2N_1}$, and
$\bar{\mathbf{G}}_{\beta 1i}=\operatorname{diag}(g_{\beta 1i}[1],\cdots,g_{\beta 1i}[N_1])\in \mathbb{R}^{N_1\times N_1}$, where $i\in[1:N_1]$.
As shown in the figure, user $(\alpha,i)$ transmits a single stream via the $N_1\times1$ beamforming vector $\bar{\mathbf{v}}_{\alpha i}$, where $i\in[1:N_1]$.
On the other hand, BS $\beta$  transmits $N_1-1$ streams to its serving user via the $2N_1\times1$ beamforming vectors $\{\bar{\mathbf{v}}_{\beta j}\}_{j\in[1:N_1-1]}$.

Then, we can set linearly independent $\{\bar{\mathbf{v}}_{\alpha i}\}_{i\in[1:N_1]}$ satisfying the uplink IA condition, i.e., $\bar{\mathbf{G}}_{\beta 1i}\bar{\mathbf{v}}_{\alpha i}$ is the same for all $i\in[1:N_1]$.
In particular, for a fixed $\bar{\mathbf{v}}_{\alpha 1}$, set $\bar{\mathbf{v}}_{\alpha i}=\bar{\mathbf{G}}^{-1}_{\beta 1i}\bar{\mathbf{G}}_{\beta 11}\bar{\mathbf{v}}_{\alpha 1}$, where $i\in[2:N_1]$.
We can also set linearly independent $\{\bar{\mathbf{v}}_{\beta j}\}_{j\in[1:N_1-1]}$ satisfying the downlink interference nulling (IN) condition, i.e., $\bar{\mathbf{G}}_{\alpha}\bar{\mathbf{v}}_{\beta j}=\mathbf{0}_{N_1}$ for all $j\in[1:N_1-1]$.
This is possible since the null space for the vector space spanned by the row vectors of $\bar{\mathbf{G}}_{\alpha}$ occupies $N_1$ dimensional subspace in $2N_1$ dimensional space. Therefore set $\{\bar{\mathbf{v}}_{\beta j}\}_{j\in[1:N_1-1]}$ as $N_1-1$ linearly independent vectors in the null space.\footnote{Although $N_1$ linearly independent vectors can satisfy the downlink IN condition, the number of possible streams for successful decoding at user $(\beta,1)$ is given by $N_1-1$ because one dimension is occupied by the inter-cell interference vectors as seen in Fig. \ref{figs:simple_case1}.} 
Hence, BS $\alpha$ is able to decode its $N_1$ intended streams achieving one DoF each since there is no inter-cell interference and $\{\bar{\mathbf{H}}_{\alpha i}\bar{\mathbf{v}}_{\alpha i}\}_{i\in[1:N_1]}$ are linearly independent almost surely.
Similarly, user $(\beta, 1)$ is able to decode its $N_1-1$ intended streams achieving one DoF each since all inter-cell interference vectors are aligned into one dimension and $\{\bar{\mathbf{H}}_{\beta j}\bar{\mathbf{v}}_{\beta j}\}_{j\in[1:N_1-1]}\cup\{\bar{\mathbf{G}}_{\beta 11}\bar{\mathbf{v}}_{\alpha 1}\}$ are linearly independent almost surely.
Finally, from the fact that total $2N_1-1$ streams are delivered over $N_1$ time slots, $d_{\Sigma}=\frac{2N_1-1}{N_1}$ is achievable.

In the following three subsections, we introduce two IA--IN schemes for general $M_1$, $M_2$, $N_1$, and $N_2$ and then derive their achievable sum DoF.
We prove that the maximum achievable sum DoF by the two proposed schemes coincides with $d_{\Sigma}$ in Theorem \ref{thm:achievable_DoF}.
As shown in Fig. \ref{figs:simple_case1}, the first key ingredient follows uplink IA from the users in cell $\alpha$ to the users in cell $\beta$.
Unlike the simple case in Fig. \ref{figs:simple_case1}, asymptotic IA using an arbitrarily large number of time slots is generally needed for simultaneously aligning interference from multiple transmitters at multiple receivers \cite{Viveck1:08}.
The second key ingredient follows downlink IN using $M_2$ antennas from BS $\beta$ to BS $\alpha$ and the users in the same cell.

\subsection{Achievable Sum DoF} \label{subsec:achievable_dof}
We propose two IA--IN schemes generalizing the main idea in Section \ref{subsec:main_idea}.
The first IA--IN scheme applies uplink inter-cell IA and downlink inter-cell and intra-cell IN. 
Specifically, the users in cell $\alpha$ align their interferences at the users in cell $\beta$.
On the other hand, BS $\beta$ nulls out its inter-cell and intra-cell interferences using $M_2$ antennas, each of which is the interference to BS $\alpha$ and the users in cell $\beta$. 
Define $\lambda_1,\lambda_2\in(0,1]$, which are the parameters related to the number of streams for the users in cells $\alpha$ and $\beta$, respectively.
Then the first IA--IN scheme achieves the sum DoF represented by the following optimization problem:
\begin{align} \label{eq:maximization1}
\max_{\substack{
            \lambda_1+\lambda_2\leq 1\\
            N_1\lambda_1\leq M_1\\
            N_1\lambda_1+N_2\lambda_2\leq M_2}}\{N_1\lambda_1+N_2\lambda_2\}.
\end{align}
Here the first constraint, $\lambda_1+\lambda_2\leq 1$, and the second constraint, $N_1\lambda_1\leq M_1$, are needed for successful decoding at the users in cell $\beta$ and BS $\alpha$, respectively.
The last constraint, $ N_1\lambda_1+N_2\lambda_2\leq M_2$ is needed for establishing beamforming vectors for downlink inter-cell and intra-cell IN at BS $\beta$.
The detailed description of  the first IA--IN scheme and the derivation of its achievable sum DoF in \eqref{eq:maximization1} are given in Section \ref{subsec:scheme1}.

Note that the above scheme is not enough to provide the optimal sum DoF for all $M_1$, $M_2$, $N_1$, and $N_2$.
If BS $\alpha$ has a large enough number of antennas (large enough $M_1$), then it is able to decode all intended streams even without downlink inter-cell IN.  
Therefore, for the second IA--IN scheme, downlink beamforming vectors at BS $\beta$ are set only for intra-cell IN, but not for inter-cell IN.
The second IA--IN scheme achieves the sum DoF represented by the following optimization problem:
\begin{align} \label{eq:maximization2}
\max_{\substack{
            \lambda_1+\lambda_2\leq 1\\
            N_1\lambda_1+N_2\lambda_2\leq M_1\\
            N_2\lambda_2\leq M_2}}\{N_1\lambda_1+N_2\lambda_2\}.
\end{align}
Again, the first two constraints are needed for successful decoding at each user in cell $\beta$ and BS $\alpha$ respectively and the last constraint is needed for establishing beamforming vectors at BS $\beta$.
The detailed description of  the second IA--IN scheme and the derivation of its achievable sum DoF in \eqref{eq:maximization2} are given in Section \ref{subsec:scheme2}.

As shown in \eqref{eq:maximization1} and \eqref{eq:maximization2}, there exists a trade-off between the two proposed IA--IN schemes.
The first scheme requires a smaller number of antennas at BS $\alpha$ since the inter-cell interference from BS $\beta$ is zero-forced, which can be verified from the second constraints in \eqref{eq:maximization1} and \eqref{eq:maximization2}.
But at the same time it requires a larger number of antennas at BS $\beta$ since BS $\beta$ have to null out both the inter-cell and intra-cell interferences, which can be verified from the third constraints in \eqref{eq:maximization1} and \eqref{eq:maximization2}.
As a result, the first IA--IN scheme provides a better sum DoF than the second IA--IN scheme if $M_1\leq M_2$, but the second IA--IN scheme provides a better sum DoF for the opposite case, see Table \ref{table:solution} in the Appendix.
More importantly, the following lemma shows that one of the two proposed IA--IN schemes with optimally choosing $\lambda_1$ and $\lambda_2$ achieves $d_{\Sigma}$ for general $M_1$, $M_2$, $N_1$, and $N_2$.

\begin{lemma} \label{lemma:optimal_sol}
Let $d_{\Sigma,1}$ and $d_{\Sigma,2}$ denote the solutions of the two linear programs in \eqref{eq:maximization1} and \eqref{eq:maximization2}, respectively.
Then 
\begin{align} \label{eq:optimal_sol1}
&d_{\Sigma,1}= d_{\Sigma} \mbox{ if }M_1\leq M_2,\nonumber\\
&d_{\Sigma,2}= d_{\Sigma} \mbox{ if }M_2\leq M_1,
\end{align}
where $d_{\Sigma}$ is given by \eqref{eqn:achievable_DoF}.
\end{lemma}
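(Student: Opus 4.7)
The plan is to exploit the symmetry between the two linear programs and then invoke LP duality. The simultaneous relabeling $(\lambda_1,\lambda_2)\leftrightarrow(\lambda_2,\lambda_1)$, $(N_1,N_2)\leftrightarrow(N_2,N_1)$, $(M_1,M_2)\leftrightarrow(M_2,M_1)$ turns each constraint of \eqref{eq:maximization1} into a constraint of \eqref{eq:maximization2}, so $d_{\Sigma,1}(M_1,M_2,N_1,N_2)=d_{\Sigma,2}(M_2,M_1,N_2,N_1)$. Since the formula \eqref{eqn:achievable_DoF} for $d_\Sigma$ is manifestly invariant under $(M_1,N_1)\leftrightarrow(M_2,N_2)$, it is enough to prove $d_{\Sigma,1}=d_\Sigma$ under the hypothesis $M_1\le M_2$; the statement for $d_{\Sigma,2}$ when $M_2\le M_1$ then follows by applying the LP~1 result to the swapped parameters.

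For the upper bound $d_{\Sigma,1}\le d_\Sigma$ I would use weak duality. Assigning nonnegative multipliers $u,v,w,p,q$ to the five constraints $\lambda_1+\lambda_2\le 1$, $N_1\lambda_1\le M_1$, $N_1\lambda_1+N_2\lambda_2\le M_2$, $\lambda_1\le 1$, $\lambda_2\le 1$, the dual of \eqref{eq:maximization1} minimizes $u+M_1v+M_2w+p+q$ subject to $u+N_1v+N_1w+p\ge N_1$ and $u+N_2w+q\ge N_2$. For each of the five terms of the minimum in \eqref{eqn:achievable_DoF} I would exhibit a dual feasible tuple attaining exactly that value: $u=\max(N_1,N_2)$ alone gives $\max(N_1,N_2)$; $w=1$ alone gives $\max(M_1,M_2)=M_2$; $(v,q)=(1,N_2)$ gives $M_1+N_2$; $(w,p)=(1,N_1)$ gives $M_2+N_1$; and $u=N_2$ together with either $v=(N_1-N_2)/N_1$ or $p=N_1-N_2$ (whichever is cheaper, yielding the factor $\min(M_1,N_1)$) gives the first term when $N_1\ge N_2$, and symmetrically for $N_2\ge N_1$. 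Taking the minimum over these five certificates yields $d_{\Sigma,1}\le d_\Sigma$.

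For the matching lower bound, I would exhibit a primal feasible $(\lambda_1,\lambda_2)$ attaining $d_\Sigma$. Under the hypothesis $M_1\le M_2$ a short argument shows that neither $M_1+N_2$ nor $M_2+N_1$ can be the strict minimum: the first term lies strictly below $M_1+N_2$ whenever $M_1,N_2\ge 1$ and $N_1\ge N_2$, while for $N_2>N_1$ the first term already dominates $M_2$; and $M_2+N_1>M_2$ trivially. This collapses the construction into three scenarios: (i) $d_\Sigma=\max(N_1,N_2)$, handled by $(\lambda_1,\lambda_2)=(1,0)$ or $(0,1)$; (ii) $d_\Sigma=M_2$, handled by $(0,M_2/N_2)$ when $M_2\le N_2$ and by $(M_1/N_1,(M_2-M_1)/N_2)$ otherwise; (iii) $d_\Sigma$ equal to the first term, necessarily with $N_1\ge N_2$, handled by $(\min(1,M_1/N_1),\,1-\min(1,M_1/N_1))$. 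In each scenario, feasibility of the non-binding constraints reduces to the hypothesis that the chosen term is the minimum in \eqref{eqn:achievable_DoF}.

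The main obstacle is the algebraic bookkeeping: for instance, $\lambda_1+\lambda_2\le 1$ at the vertex used for $d_\Sigma=M_2$ is equivalent to the inequality that the first term is at least $M_2$, which holds precisely because $M_2$ is the active minimum. Once these implications are tracked, the three-case primal construction combined with the dual certificates yields $d_{\Sigma,1}=d_\Sigma$ when $M_1\le M_2$, and the symmetry argument closes the proof.
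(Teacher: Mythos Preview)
Your proof is correct and takes a genuinely different, more structural route than the paper. The paper's Appendix partitions the parameter space $(M_1,M_2,N_1,N_2)$ into $24$ regimes according to the total ordering of the four integers, solves each of the two linear programs by inspecting corner points in every regime, tabulates $d_{\Sigma,1}$, $d_{\Sigma,2}$, and $\max(d_{\Sigma,1},d_{\Sigma,2})$, and then verifies regime by regime that the maximum matches the closed form \eqref{eqn:achievable_DoF}. Your argument bypasses this enumeration: the symmetry $(\lambda_1,\lambda_2,M_1,M_2,N_1,N_2)\leftrightarrow(\lambda_2,\lambda_1,M_2,M_1,N_2,N_1)$ halves the work, the five dual certificates deliver the upper bound in one shot (one certificate per term of the $\min$), and the primal side collapses to three scenarios. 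What your approach buys is conceptual transparency and scalability; what the paper's table buys is explicitness about which scheme and which vertex is active in each regime, information that is then reused in Section~\ref{subsec:dof_gain}.

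Two small points of bookkeeping to tighten. First, your justification that $M_1+N_2$ is never the strict minimum when $N_2>N_1$ is phrased as ``the first term already dominates $M_2$'', which is not the relevant inequality; the clean observation is simply $\max(N_1,N_2)=N_2\le M_1+N_2$, so $M_1+N_2$ can never undercut $\max(N_1,N_2)$. Second, in scenario~(ii) with $M_2>N_2$, feasibility of the vertex $(M_1/N_1,(M_2-M_1)/N_2)$ also requires $M_1\le N_1$ for $\lambda_1\le 1$; this does follow, because $d_\Sigma=M_2>N_2$ forces $\max(N_1,N_2)=N_1\ge M_2\ge M_1$, but you should make that implication explicit. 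With these tracked, your argument is complete.
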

\begin{proof}
We refer to the Appendix for the proof.
\end{proof}

Therefore, Lemma \ref{lemma:optimal_sol} completes the achievability proof of Theorem \ref{thm:achievable_DoF}.
In the next two subsections, we state in details how to achieve \eqref{eq:maximization1} and \eqref{eq:maximization2}.

\begin{remark}[Optimal Scheme for  Cell Coordination]
For the cell coordination problem, e.g., stated in Remark \ref{re:uplink_downlink} and Section \ref{subsec:dof_gain}, only one of the two proposed IA--IN schemes is enough to maximize the sum DoF achievable by uplink--downlink operation. 
In particular, we can attain the maximum sum DoF achievable by uplink--downlink operation using the first IA--IN scheme by operating the cell having more BS antennas as downlink (and the other cell as uplink). \hfill$\lozenge$
\end{remark}

\subsection{Uplink Inter-Cell IA and Downlink Inter-Cell and Intra-Cell IN} \label{subsec:scheme1}
To prove that \eqref{eq:maximization1} is achievable, we state the first IA--IN scheme, which applies uplink inter-cell IA and downlink inter-cell and intra-cell IN.

From now on, $\lambda_1,\lambda_2\in(0,1]$ are assumed to be set such that they satisfy the three constraints in \eqref{eq:maximization1}.
Define $\mathcal{S}_T =[0:T-1]^{N_1N_2}$. We first divide $W_{\alpha i}$, $i\in[1:N_1]$, into $T^{N_1N_2}$ submessages $\big\{W^{(\mathbf{s})}_{\alpha i}\big\}_{\mathbf{s}\in\mathcal{S}_T}$.
Let $\big[c^{(\mathbf{s})}_{\alpha i}[1],\cdots,c^{(\mathbf{s})}_{\alpha i}[n]\big]$ denote a length-$n$ codeword of Gaussian codebook generated i.i.d. from $\mathcal{N}(0,P)$, that is associated with $W^{(\mathbf{s})}_{\alpha i}$.
Similarly, divide $W_{\beta j}$, $j\in[1:N_2]$, into $\frac{\lambda_2}{\lambda_1}T^{N_1N_2}$ submessages $\big\{W^{(k)}_{\beta j}\big\}_{k\in[1:\frac{\lambda_2}{\lambda_1}T^{N_1N_2}]}$.
Let $\big[c^{(k)}_{\beta j}[1],\cdots,c^{(k)}_{\beta j}[n]\big]$ denote a length-$n$ codeword of Gaussian codebook generated i.i.d. from $\mathcal{N}(0,P)$, that is associated with $W^{(k)}_{\beta j}$.

Let $d=\frac{1}{\lambda_1}(T+1)^{N_1 N_2}$. Communication will take place over a block of $nd$ time slots. 
Each of the codewords defined above will be transmitted via a length-$d$ time-extended beamforming vector.
For easy explanation, denote the length-$d$ time-extended inputs and outputs as
\begin{align} \label{eq:time_extended_inout}
\bar{\mathbf{x}}_{\alpha i}[m]&=\left[x_{\alpha i}[(m-1)d+1],\cdots,  x_{\alpha i}[md]\right]^{\dagger}\in\mathbb{R}^{d\times 1},\nonumber\\
\bar{\mathbf{x}}_{\beta}[m]&=\left[\mathbf{x}_{\beta}[(m-1)d+1],\cdots,  \mathbf{x}_{\beta}[md]\right]^{\dagger}\in\mathbb{R}^{M_2d\times 1},\nonumber\\
\bar{\mathbf{y}}_{\alpha}[m]&=\left[\mathbf{y}_{\alpha}[(m-1)d+1],\cdots,  \mathbf{y}_{\alpha}[md]\right]^{\dagger}\in\mathbb{R}^{M_1d\times 1},\nonumber\\
\bar{\mathbf{y}}_{\beta j}[m]&=\left[y_{\beta j}[(m-1)d+1],\cdots,  y_{\beta j}[md]\right]^{\dagger}\in\mathbb{R}^{d\times 1},
\end{align} 
where $m\in[1:n]$.
Then from \eqref{eq:in_out_1} and \eqref{eq:in_out_2}
\begin{align} \label{eq:time_extended1}
\bar{\mathbf{y}}_{\alpha}[m]&=\sum_{i=1}^{N_1}\bar{\mathbf{H}}_{\alpha i}[m]\bar{\mathbf{x}}_{\alpha i}[m]+\bar{\mathbf{G}}_{\alpha}[m]\bar{\mathbf{x}}_{\beta}[m]+\bar{\mathbf{z}}_{\alpha}[m],\nonumber\\
\bar{\mathbf{y}}_{\beta j}[m]&=\bar{\mathbf{H}}_{\beta j}[m]\bar{\mathbf{x}}_{\beta}[m]+\sum_{i=1}^{N_1}\bar{\mathbf{G}}_{\beta ji}[m]\bar{\mathbf{x}}_{\alpha i}[m]+\bar{\mathbf{z}}_{\beta j}[m],
\end{align}
where
\begin{align} \label{eq:time_extended_ch}
\bar{\mathbf{H}}_{\alpha i}[m]&=\operatorname{diag}(\mathbf{h}_{\alpha i}[(m-1)d+1],\cdots,\mathbf{h}_{\alpha i}[md])\in\mathbb{R}^{M_1d\times d},\nonumber\\
\bar{\mathbf{H}}_{\beta j}[m]&=\operatorname{diag}(\mathbf{h}_{\beta j}[(m-1)d+1],\cdots,\mathbf{h}_{\beta j}[md])\in\mathbb{R}^{d\times M_2d},\nonumber\\
\bar{\mathbf{G}}_{\alpha }[m]&=\operatorname{diag}(\mathbf{G}_{\alpha }[(m-1)d+1],\cdots,\mathbf{G}_{\alpha }[md])\in\mathbb{R}^{M_1d\times M_2d},\nonumber\\
\bar{\mathbf{G}}_{\beta ji }[m]&=\operatorname{diag}(g_{\beta ji}[(m-1)d+1],\cdots,g_{\beta ji}[md])\in\mathbb{R}^{d\times d}
\end{align}
and 
\begin{align}
\bar{\mathbf{z}}_{\alpha}[m]&=\left[\mathbf{z}_{\alpha}[(m-1)d+1],\cdots,  \mathbf{z}_{\alpha}[md]\right]^{\dagger}\in\mathbb{R}^{M_1 d\times 1},\nonumber\\
\bar{\mathbf{z}}_{\beta j}[m]&=\left[z_{\beta j}[(m-1)d+1],\cdots,  z_{\beta j}[md]\right]^{\dagger}\in\mathbb{R}^{d\times 1}.
\end{align}

\subsubsection{Transmit beamforming for IA and IN} \label{subsec:proposed_IA1}

\begin{figure}[t!]
\begin{center}
\includegraphics[scale=0.7]{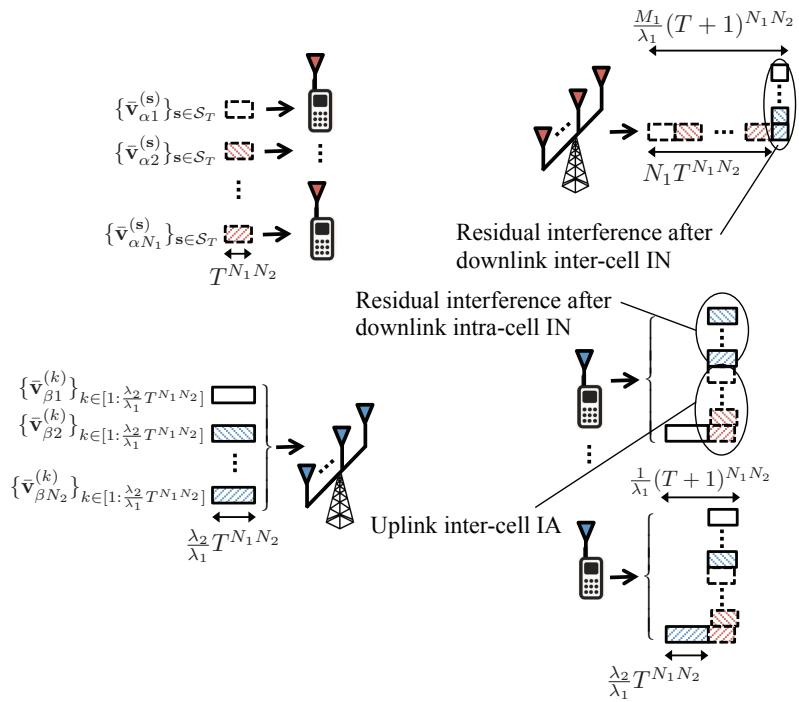}
\end{center}
\vspace{-0.15in}
\caption{Uplink inter-cell IA and downlink inter-cell and intra-cell IN, where for convenience we assume $\lambda_1\leq \lambda_2$ in the figure.}
\label{fig:scheme_i_bc_mac}
\vspace{-0.1in}
\end{figure}

For $m\in[1:n]$ and $\mathbf{s}\in \mathcal{S}_T$, $c^{(\mathbf{s})}_{\alpha i}[m]$ is transmitted via a length-$d$ time-extended beamforming vector $\bar{\mathbf{v}}^{(\mathbf{s})}_{\alpha i}[m]\in \mathbb{R}^{d\times 1}$.
Similarly, for $m\in[1:n]$ and $k\in[1:\frac{\lambda_2}{\lambda_1}T^{N_1N_2}]$, $c^{(k)}_{\beta j}[m]$ is transmitted via a length-$d$ time-extended beamforming vector $\bar{\mathbf{v}}^{(k)}_{\beta j}[m]\in \mathbb{R}^{M_2d\times 1}$.
That is, user $(\alpha, i)$ transmits
\begin{equation} \label{eq:input_vec1}
\mathbf{x}_{\alpha i}[m]=\gamma\sum_{\mathbf{s}\in \mathcal{S}_T}\bar{\mathbf{v}}^{(\mathbf{s})}_{\alpha i}[m]c^{(\mathbf{s})}_{\alpha i}[m],
\end{equation}
and BS $\beta$ transmits
\begin{equation} \label{eq:input_vec2}
\mathbf{x}_{\beta}[m]=\gamma\sum_{j=1}^{N_2}\sum_{k=1}^{\frac{\lambda_2}{\lambda_1}T^{N_1 N_2}}\bar{\mathbf{v}}^{(k)}_{\beta j}[m]c^{(k)}_{\beta j}[m],
\end{equation}
where $\gamma>0$ is chosen to satisfy the average power $P$.
Figure \ref{fig:scheme_i_bc_mac} illustrates how to construct these length-$d$ time-extended  beamforming vectors for  uplink inter-cell IA and downlink inter-cell and intra-cell IN. The detailed construction of such beamforming vectors is explained in the following. Since the overall construction is identical for all $m\in[1:n]$, we assume $m=1$ and omit the index $m$ from now on.
\\\\
{\bf Uplink inter-cell IA:}
\\
To align inter-cell interference from $N_1$ users in cell $\alpha$ to $N_2$ users in cell $\beta$, asymptotic signal space alignment is needed, originally proposed in \cite{Viveck1:08}.
In this paper, we adopt a recent framework developed in \cite{Shomorony:13} for asymptotic signal space alignment.
For $\mathbf{s}=[s_{11},s_{12},\cdots,s_{N_2 N_1}]\in \mathcal{S}_T$, define
\begin{equation} \label{eq:direction}
v^{(\mathbf{s})}[t]=\prod_{1\leq i\leq N_1,1\leq j\leq N_2}g_{\beta ji}[t]^{s_{ji}}
\end{equation}
for $t\in[1:d]$ and $\bar{\mathbf{v}}^{(\mathbf{s})}=[v^{(\mathbf{s})}[1],\cdots,v^{(\mathbf{s})}[d]]^{\dagger}$.
Set 
\begin{align} \label{eq:beam1}
\bar{\mathbf{v}}^{(\mathbf{s})}_{\alpha i}=\bar{\mathbf{v}}^{(\mathbf{s})}
\end{align}
for all $i\in[1:N_1]$ and $\mathbf{s}\in\mathcal{S}_T$.
The following lemma  shows that the beamforming vectors defined in \eqref{eq:direction} and \eqref{eq:beam1}  guarantee asymptotic uplink inter-cell IA at the users in cell $\beta$. 
\begin{lemma} \label{lemma:dim}
The signal space spanned by $\{\bar{\mathbf{G}}_{\beta ji}\bar{\mathbf{v}}^{(\mathbf{s})}_{\alpha i}\}_{i\in[1:N_1],j\in[1:N_2],\mathbf{s}\in\mathcal{S}_T}$ occupies at least $T^{N_1N_2}$ dimensional subspace and at most $(T+1)^{N_1 N_2}$ dimensional subspace in $\frac{1}{\lambda_1}(T+1)^{N_1N_2}$ dimensional space almost surely. 
\end{lemma}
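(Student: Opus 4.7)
The plan is to identify each vector $\bar{\mathbf{G}}_{\beta ji}\bar{\mathbf{v}}^{(\mathbf{s})}_{\alpha i}$ with a ``shifted'' monomial vector $\bar{\mathbf{v}}^{(\mathbf{s}')}$, characterize the range of these shifted exponents as an integer sub-box of $\mathbb{Z}^{N_1N_2}$, and then invoke a standard asymptotic IA argument for almost-sure linear independence. The starting point is the direct calculation
\[
\bar{\mathbf{G}}_{\beta ji}\bar{\mathbf{v}}^{(\mathbf{s})}_{\alpha i} \;=\; \bar{\mathbf{v}}^{(\mathbf{s}+\mathbf{e}_{ji})},
\]
where $\mathbf{e}_{ji}\in\{0,1\}^{N_1N_2}$ has a single $1$ at coordinate $(j,i)$; this uses only that $\bar{\mathbf{G}}_{\beta ji}$ is diagonal with $(t,t)$-entry $g_{\beta ji}[t]$ and that $\bar{\mathbf{v}}^{(\mathbf{s})}_{\alpha i}=\bar{\mathbf{v}}^{(\mathbf{s})}$ does not depend on $i$. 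As $(i,j,\mathbf{s})$ sweeps $[1:N_1]\times[1:N_2]\times\mathcal{S}_T$, the exponent $\mathbf{s}'=\mathbf{s}+\mathbf{e}_{ji}$ covers exactly $[0:T]^{N_1N_2}\setminus\{\mathbf{0}\}$: any nonzero $\mathbf{s}'$ in $[0:T]^{N_1N_2}$ has some coordinate $s'_{ji}\geq 1$, and the corresponding $\mathbf{s}'-\mathbf{e}_{ji}$ lies in $\mathcal{S}_T$. Hence the signal space equals $\operatorname{span}\{\bar{\mathbf{v}}^{(\mathbf{s}')}:\mathbf{s}'\in[0:T]^{N_1N_2}\setminus\{\mathbf{0}\}\}$, whose generating set has cardinality $(T+1)^{N_1N_2}-1$, which immediately gives the upper bound $(T+1)^{N_1N_2}$.

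For the lower bound I would restrict attention to the sub-box $[1:T]^{N_1N_2}$ of size $T^{N_1N_2}$, which is contained in the exponent range above and hence indexes $T^{N_1N_2}$ vectors $\bar{\mathbf{v}}^{(\mathbf{s}')}$ inside the signal space. Keeping only the first $T^{N_1N_2}$ of the $d=\frac{1}{\lambda_1}(T+1)^{N_1N_2}\geq T^{N_1N_2}$ components yields a square matrix $\mathbf{M}$ whose $(t,\mathbf{s}')$-entry is the monomial $\prod_{j,i}g_{\beta ji}[t]^{s'_{ji}}$, and it suffices to prove that $\mathbf{M}$ is almost surely invertible. Since $\det\mathbf{M}$ is a polynomial in the $d\cdot N_1N_2$ i.i.d.\ channel coefficients $\{g_{\beta ji}[t]\}$ drawn from a continuous distribution, once one verifies that this polynomial is not identically zero, a Schwartz--Zippel-type argument forces $\det\mathbf{M}\neq 0$ almost surely, and the $T^{N_1N_2}$ vectors are a.s.\ linearly independent.

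The main obstacle is therefore the nonvanishing of the polynomial $\det\mathbf{M}$. The cleanest route is to exhibit a single channel realization at which $\det\mathbf{M}\neq 0$; a convenient choice is to specialize the $N_1N_2$ channel coefficients on the $t$-th time slot so that $\mathbf{M}$ reduces to a generalized Vandermonde matrix indexed by the pairwise distinct exponent multi-indices in $[1:T]^{N_1N_2}$, whose nonsingularity for parameters in general position is classical. With such a witness in hand, $\det\mathbf{M}$ is a nonzero polynomial in the channel entries, the almost-sure linear independence of the $T^{N_1N_2}$ vectors $\bar{\mathbf{v}}^{(\mathbf{s}')}$ follows, and Lemma \ref{lemma:dim} is established.
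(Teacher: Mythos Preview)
Your identification $\bar{\mathbf{G}}_{\beta ji}\bar{\mathbf{v}}^{(\mathbf{s})}_{\alpha i}=\bar{\mathbf{v}}^{(\mathbf{s}+\mathbf{e}_{ji})}$ and the upper bound via containment in $\mathcal{S}_{T+1}$ are correct and match the paper. However, your claimed characterization of the exponent range as $[0:T]^{N_1N_2}\setminus\{\mathbf{0}\}$ is false, and this breaks your lower-bound step. The implication ``$s'_{ji}\geq 1$ implies $\mathbf{s}'-\mathbf{e}_{ji}\in\mathcal{S}_T$'' fails whenever some \emph{other} coordinate of $\mathbf{s}'$ equals $T$: subtracting $\mathbf{e}_{ji}$ leaves that coordinate at $T\notin[0:T-1]$. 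Concretely, for $N_1N_2\geq 2$ the vector $(T,T,0,\dots,0)$ is not in the range, and more to the point $(T,\dots,T)\in[1:T]^{N_1N_2}$ is not in the range either. Hence your chosen sub-box $[1:T]^{N_1N_2}$ is \emph{not} contained in the exponent range, so you cannot conclude that the corresponding $\bar{\mathbf{v}}^{(\mathbf{s}')}$ all lie in the signal space.

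The fix is immediate and is essentially the paper's argument: fix any single pair $(j,i)$ and take the shifted box $\mathbf{e}_{ji}+\mathcal{S}_T$, which has exactly $T^{N_1N_2}$ elements and is tautologically contained in the range. Equivalently, since $\bar{\mathbf{G}}_{\beta ji}$ is an almost-surely invertible diagonal matrix, the family $\{\bar{\mathbf{G}}_{\beta ji}\bar{\mathbf{v}}^{(\mathbf{s})}\}_{\mathbf{s}\in\mathcal{S}_T}$ inherits linear independence from $\{\bar{\mathbf{v}}^{(\mathbf{s})}\}_{\mathbf{s}\in\mathcal{S}_T}$; the paper simply cites \cite{Shomorony:13} for the latter, while your Schwartz--Zippel/Vandermonde route is a valid self-contained substitute once applied to the correct index set.
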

\begin{proof}
From the fact that $\{\bar{\mathbf{v}}^{(\mathbf{s})}\}_{\mathbf{s}\in\mathcal{S}_T}$ is a set of $T^{N_1N_2}$ linearly independent vectors almost surely \cite{Shomorony:13}, $\operatorname{span}\big(\{\bar{\mathbf{G}}_{\beta ji}\bar{\mathbf{v}}^{(\mathbf{s})}_{\alpha i}\}_{i\in[1:N_1],j\in[1:N_2],\mathbf{s}\in\mathcal{S}_T}\big)$ occupies at least $T^{N_1N_2}$ dimensional subspace almost surely. 

Now consider the upper bound. For all $i\in[1:N_1]$, $j\in[1:N_2]$, and $\mathbf{s}\in\mathcal{S}_T$, 
\begin{equation}
\bar{\mathbf{G}}_{\beta ji}\bar{\mathbf{v}}^{(\mathbf{s})}\in \{\bar{\mathbf{v}}^{(\mathbf{s}')}\}_{\mathbf{s}'\in\mathcal{S}_{T+1}}
\end{equation}
showing that $\operatorname{span}\big(\{\bar{\mathbf{G}}_{\beta ji}\bar{\mathbf{v}}^{(\mathbf{s})}_{\alpha i}\}_{i\in[1:N_1],j\in[1:N_2],\mathbf{s}\in\mathcal{S}_T}\big)$ occupies at most $(T+1)^{N_1N_2}$ dimensional subspace since the cardinality of $\mathcal{S}_{T+1}$ is given by $(T+1)^{N_1N_2}$.
Therefore, Lemma \ref{lemma:dim} holds.
\end{proof}
\vspace{0.2in}
{\bf Downlink inter-cell and intra-cell IN:}
\\
From \eqref{eq:time_extended1}, \eqref{eq:input_vec1}, and \eqref{eq:input_vec2}.
\begin{align}
\bar{\mathbf{y}}_{\alpha}&=\gamma\sum_{i=1}^{N_1}\sum_{\mathbf{s}\in \mathcal{S}_T}\bar{\mathbf{H}}_{\alpha i}\bar{\mathbf{v}}^{(\mathbf{s})}_{\alpha i}c^{(\mathbf{s})}_{\alpha i}+\gamma\sum_{j=1}^{N_2}\sum_{k=1}^{\frac{\lambda_2}{\lambda_1}T^{N_1 N_2}}\bar{\mathbf{G}}_{\alpha}\bar{\mathbf{v}}^{(k)}_{\beta j}c^{(k)}_{\beta j}+\bar{\mathbf{z}}_{\alpha},\nonumber\\
\bar{\mathbf{y}}_{\beta j}&=\gamma\sum_{j=1}^{N_2}\sum_{k=1}^{\frac{\lambda_2}{\lambda_1}T^{N_1 N_2}}\bar{\mathbf{H}}_{\beta j}\bar{\mathbf{v}}^{(k)}_{\beta j}c^{(k)}_{\beta j}+\gamma\sum_{i=1}^{N_1}\sum_{\mathbf{s}\in \mathcal{S}_T}\bar{\mathbf{G}}_{\beta ji}\bar{\mathbf{v}}^{(\mathbf{s})}_{\alpha i}c^{(\mathbf{s})}_{\alpha i}+\bar{\mathbf{z}}_{\beta j}.
\end{align}
Hence, in order to null out inter-cell interference by zero-forcing at BS $\alpha$,
\begin{align} \label{eq:cond1}
\bar{\mathbf{G}}_{\alpha}\bar{\mathbf{v}}^{(k)}_{\beta j}\perp\operatorname{span}\left(\{\bar{\mathbf{H}}_{\alpha i'}\bar{\mathbf{v}}^{(\mathbf{s})}_{\alpha i'}\}_{i'\in[1:N_1],\mathbf{s}\in\mathcal{S}_T}\right)
\end{align}
for all $j\in[1:N_2]$ and $k\in[1:\frac{\lambda_2}{\lambda_1}T^{N_1N_2}]$.

In order to null out intra-cell interference, we first define $\frac{\lambda_2}{\lambda_1}T^{N_1N_2}$ dimensional subspace in $\frac{1}{\lambda_1}(T+1)^{N_1 N_2}$ dimensional space represented by $\operatorname{span}\left(\{\bar{\mathbf{w}}_{k'}\}_{k'\in[1:\frac{\lambda_2}{\lambda_1}T^{N_1N_2}]}\right)$, which will be used for the signal space of the intended submessages at the users in cell $\beta$.
From Lemma \ref{lemma:dim}, $\operatorname{span}\left(\{\bar{\mathbf{G}}_{\beta j'i'}\bar{\mathbf{v}}^{(\mathbf{s})}_{\alpha i'}\}_{i'\in[1:N_1],j'\in[1:N_2],\mathbf{s}\in\mathcal{S}_T}\right)$ occupies at most $(T+1)^{N_1 N_2}$ dimensions almost surely, which means the null space of  $\operatorname{span}\left(\{\bar{\mathbf{G}}_{\beta j'i'}\bar{\mathbf{v}}^{(\mathbf{s})}_{\alpha i'}\}_{i'\in[1:N_1],j'\in[1:N_2],\mathbf{s}\in\mathcal{S}_T}\right)$ occupies at least $\frac{1}{\lambda_1}(T+1)^{N_1 N_2}-(T+1)^{N_1 N_2}$ dimensions almost surely.  
Hence we set $\{\bar{\mathbf{w}}_{k'}\}_{k'\in[1:\frac{\lambda_2}{\lambda_1}T^{N_1N_2}]}$ as a subset of  $\frac{\lambda_2}{\lambda_1}T^{N_1N_2}$ basis consisting of the null space of $\operatorname{span}\left(\{\bar{\mathbf{G}}_{\beta j'i'}\bar{\mathbf{v}}^{(\mathbf{s})}_{\alpha i'}\}_{i'\in[1:N_1],j'\in[1:N_2],\mathbf{s}\in\mathcal{S}_T}\right)$.
This is possible because 
\begin{align}
\frac{1}{\lambda_1}(T+1)^{N_1 N_2}-(T+1)^{N_1 N_2} \geq \frac{\lambda_2}{\lambda_1}T^{N_1 N_2},
\end{align}
where the inequality follows since $\lambda_1+\lambda_2\leq 1$.
Therefore, for the intra-cell IN by zero-forcing at the users in cell $\beta$, 
\begin{align} \label{eq:cond2}
\bar{\mathbf{H}}_{\beta i}\bar{\mathbf{v}}^{(k)}_{\beta j}\perp \operatorname{span}\left(\{\bar{\mathbf{w}}_{k'}\}_{k'\in[1:\frac{\lambda_2}{\lambda_1}T^{N_1N_2}]}\right)
\end{align}
should be satisfied for all $i,j\in[1:N_2]$, $i\neq j$, and $k\in[1:\frac{\lambda_2}{\lambda_1}T^{N_1N_2}]$.

As a consequence, from \eqref{eq:cond1} and \eqref{eq:cond2}, $\bar{\mathbf{v}}^{(k)}_{\beta j}$ should be orthogonal with the following  vectors:
\begin{align} \label{eq:cond3}
&\{\bar{\mathbf{G}}_{\alpha}^{\dagger} \bar{\mathbf{H}}_{\alpha i'}\bar{\mathbf{v}}^{(\mathbf{s})}_{\alpha i'}\}_{i'\in[1:N_1],\mathbf{s}\in\mathcal{S}_T},\nonumber\\
&\{\bar{\mathbf{H}}^{\dagger}_{\beta i'}\bar{\mathbf{w}}_{k'}\}_{i'\in[1:N_2],i'\neq j,k'\in[1:\frac{\lambda_2}{\lambda_1}T^{N_1N_2}]}.
\end{align}
Since there are total $(N_1+\frac{\lambda_2}{\lambda_1}(N_2-1))T^{N_1 N_2}$ vectors in \eqref{eq:cond3} and $\bar{\mathbf{v}}^{(k)}_{\beta j}$ has $\frac{M_2}{\lambda_1} (T+1)^{N_1 N_2}$ elements,
we can set linearly independent $\{\bar{\mathbf{v}}^{(k)}_{\beta j}\}_{k\in[1:\frac{\lambda_2}{\lambda_1}T^{N_1N_2}]}$ orthogonal with the vectors in  \eqref{eq:cond3} for all $j\in[1:N_2]$ if
\begin{align} \label{eq:condition3}
\frac{M_2}{\lambda_1}(T+1)^{N_1 N_2}-(N_1+\frac{\lambda_2}{\lambda_1}(N_2-1))T^{N_1 N_2}>\frac{\lambda_2}{\lambda_1}T^{N_1N_2},
\end{align}
which is satisfied from the assumption that
\begin{align}  \label{eq:condition33}
 N_1\lambda_1 +N_2\lambda_2  \leq M_2.
\end{align}
In conclusion, $\{\bar{\mathbf{v}}^{(k)}_{\beta j}\}_{j\in[1:N_2],k\in[1:\frac{\lambda_2}{\lambda_1}T^{N_1N_2}]}$ can be set to satisfy the downlink inter-cell and intra-cell IN conditions almost surely.

\subsubsection{Zero-forcing decoding}
Each submessage will be decoded by zero-forcing. we first introduce the following properties:
\begin{itemize}
\item[(A)] $\bar{\mathbf{v}}^{(\mathbf{s})}_{\alpha i}$ is a function of $\{\bar{\mathbf{G}}_{\beta j'i'}\}_{i'\in[1:N_1],j'\in[1:N_2]}$ (see \eqref{eq:direction} and \eqref{eq:beam1})
\item[(B)] $\bar{\mathbf{v}}^{(k)}_{\beta j}$ is a function of $\{\bar{\mathbf{H}}_{\alpha i'}\}_{i'\in[1:N_1]}$, $\{\bar{\mathbf{H}}_{\beta j'}\}_{j'\in[1:N_2],j'\neq j}$, $\bar{\mathbf{G}}_{\alpha}$, and $\{\bar{\mathbf{G}}_{\beta j'i'}\}_{i'\in[1:N_1],j'\in[1:N_2]}$ (see \eqref{eq:cond3} and Property (A)),
\end{itemize}
Based on the above properties, we  prove that one DoF is achievable for each submessage.
\\\\
{\bf Decoding at BS $\alpha$:}
\\
Since $\{\bar{\mathbf{v}}^{(k)}_{\beta j}\}_{j\in[1:N_2],k\in[1:\frac{\lambda_2}{\lambda_1}T^{N_1N_2}]}$ is set to satisfy the inter-cell IN condition in \eqref{eq:cond1}, inter-cell interference will disappear after zero-forcing at BS $\alpha$.
Hence, in order to achieve one DoF for each submessage, $\{\bar{\mathbf{H}}_{\alpha i'}\bar{\mathbf{v}}^{(\mathbf{s})}_{\alpha i'}\}_{i'\in[1:N_1],\mathbf{s}\in\mathcal{S}_T}$ should be a set of linearly independent vectors.
Note that $\{\mathbf{v}^{(\mathbf{s})}_{\alpha i'}\}_{\mathbf{s}\in\mathcal{S}_{T}}$ is a set of linearly independent vectors almost surely \cite{Shomorony:13}.
Furthermore, from Property (A), $\bar{\mathbf{H}}_{\alpha i'}\bar{\mathbf{v}}^{(\mathbf{s})}_{\alpha i'}$ is a random projection of $\bar{\mathbf{v}}^{(\mathbf{s})}_{\alpha i'}$ into $M_1 d$ dimensional space ($\bar{\mathbf{v}}^{(\mathbf{s})}_{\alpha i'}$ is set independent of $\bar{\mathbf{H}}_{\alpha i'}$).
Therefore, $\{\bar{\mathbf{H}}_{\alpha i'}\bar{\mathbf{v}}^{(\mathbf{s})}_{\alpha i'}\}_{i'\in[1:N_1],\mathbf{s}\in\mathcal{S}_T}$ is a set of linearly independent vectors almost surely if
\begin{align}
N_1T^{N_1 N_2}\leq \frac{M_1}{\lambda_1}(T+1)^{N_1N_2},
\end{align}
which is satisfied from the assumption that 
\begin{align} \label{eq:dec_con1}
N_1\lambda_1\leq M_1.
\end{align}
In conclusion,  each submessage intended to BS $\alpha$ can be decoded by achieving one DoF almost surely.
\\\\
{\bf Decoding at the users in cell $\beta$:}
\\
Consider the decoding at user $(\beta,j)$, where $j\in[1:N_2]$.
Since $\{\bar{\mathbf{v}}^{(k)}_{\beta j'}\}_{j'\in[1:N_2],k\in[1:\frac{\lambda_2}{\lambda_1}T^{N_1N_2}]}$ is set to satisfy the intra-cell IN condition in \eqref{eq:cond2}, intra-cell interference will disappear after zero-forcing.
Hence, in order to achieve one DoF for each submessage, 
$\{\bar{\mathbf{H}}_{\beta j}\bar{\mathbf{v}}^{(k)}_{\beta j}\}_{k\in[1:\frac{\lambda_2}{\lambda_1}T^{N_1 N_2}]}$ should be a set of linearly independent vectors and 
\begin{align} \label{eq:independency}
\bar{\mathbf{H}}_{\beta j}\bar{\mathbf{v}}^{(k)}_{\beta j}\notin\operatorname{span}\left(\{\bar{\mathbf{G}}_{\beta ji'}\bar{\mathbf{v}}^{(\mathbf{s})}_{\alpha i'}\}_{i'\in[1:N_1],\mathbf{s}\in\mathcal{S}_T}\right)
\end{align}
should be satisfied for all $k\in[1:\frac{\lambda_2}{\lambda_1}T^{N_1 N_2}]$.

First consider the linearly independent condition.
From Property (B), $\bar{\mathbf{H}}_{\beta j}\bar{\mathbf{v}}^{(k)}_{\beta j}$ is a random projection of $\bar{\mathbf{v}}^{(k)}_{\beta j}$ into $\frac{1}{\lambda_1}(T+1)^{N_1 N_2}$ dimensional space ($\bar{\mathbf{v}}^{(k)}_{\beta j}$ is set independent of $\bar{\mathbf{H}}_{\beta j}$).
Hence, $\{\bar{\mathbf{H}}_{\beta j}\bar{\mathbf{v}}^{(k)}_{\beta j}\}_{k\in[1:\frac{\lambda_2}{\lambda_1}T^{N_1 N_2}]}$ is a set of linearly independent vectors almost surely since $\frac{\lambda_2}{\lambda_1}T^{N_1 N_2}\leq \frac{1}{\lambda_1}(T+1)^{N_1N_2}$.

Now consider the condition in \eqref{eq:independency}.
Lemma \ref{lemma:dim} shows that $\operatorname{span}\left(\{\bar{\mathbf{G}}_{\beta ji'}\bar{\mathbf{v}}^{(\mathbf{s})}_{\alpha i'}\}_{i'\in[1:N_1],\mathbf{s}\in\mathcal{S}_T}\right)$ occupies at most $(T+1)^{N_1 N_2}$ dimensions due to the uplink inter-cell IA.
From Property (B), $\bar{\mathbf{H}}_{\beta j}\bar{\mathbf{v}}^{(k)}_{\beta j}$ is a random projection of $\bar{\mathbf{v}}^{(k)}_{\beta j}$ into $d$ dimensional space ( $\bar{\mathbf{v}}^{(k)}_{\beta j}$ is set independent of $\bar{\mathbf{H}}_{\beta j}$) and $\operatorname{span}\left(\{\bar{\mathbf{G}}_{\beta ji}\bar{\mathbf{v}}^{(\mathbf{s})}_{\alpha i}\}_{i\in[1:N_1],\mathbf{s}\in\mathcal{S}_T}\right)$ is independent of $\bar{\mathbf{H}}_{\beta j}$. 
Therefore \eqref{eq:independency} is satisfied almost surely if 
\begin{align} \label{eq:dec_con22}
\frac{\lambda_2}{\lambda_1}T^{N_1 N_2}+(T+1)^{N_1 N_2}\leq \frac{1}{\lambda_1}(T+1)^{N_1 N_2},
\end{align}
which is satisfied from the assumption that 
\begin{align}  \label{eq:dec_con2}
\lambda_1+\lambda_2\leq 1.
\end{align}
 In conclusion, each submessage intended to the users in the second cell can be decoded by achieving one DoF almost surely.

\subsubsection{Achievable Sum DoF}
From the facts that each submessage is delivered via a length-$n$ codeword and total $(N_1+\frac{\lambda_2}{\lambda_1}N_2)T^{N_1 N_2}$ submessages are delivered during $nd=n\frac{1}{\lambda_1}(T+1)^{N_1 N_2}$ time slots, the sum DoF 
\begin{equation} \label{eq:achievable_DoF1}
\frac{(N_1+\frac{\lambda_2}{\lambda_1}N_2)T^{N_1 N_2}}{\frac{1}{\lambda_1}(T+1)^{N_1 N_2}}
\end{equation}
is achievable under the three constraints in \eqref{eq:condition33}, \eqref{eq:dec_con1}, and \eqref{eq:dec_con2}. 
Finally,  since \eqref{eq:achievable_DoF1} converges to $N_1 \lambda_1+N_2 \lambda_2$ as $T$ increases, the sum DoF in \eqref{eq:maximization1} is achievable.

\subsection{Uplink Inter-Cell IA and Downlink Intra-Cell IN} \label{subsec:scheme2}
In this subsection, we prove that \eqref{eq:maximization2} is achievable.
Assume that $\lambda_1,\lambda_2\in(0,1]$ are set such that they satisfy the three constraints in \eqref{eq:maximization2}.
The second IA--IN scheme briefly explained in Section \ref{subsec:achievable_dof} is a simple modification of the first IA--IN scheme.
The overall transmission based on the length-$d$ time-extended transmit beamforming is the same as in Section \ref {subsec:scheme1}.
The uplink inter-cell IA is the same as in Section \ref {subsec:scheme1}.
For downlink beamforming at BS $\beta$, on the other hand, $\{\bar{\mathbf{v}}^{(k)}_{\beta j}\}_{k\in[1:\frac{\lambda_2}{\lambda_1}T^{N_1N_2}]}$ is set only for the intra-cell IN, but not for inter-cell IN.
That is,  \eqref{eq:cond2} should be satisfied  for all $i,j\in[1:N_2]$, $i\neq j$, and $k\in[1:\frac{\lambda_2}{\lambda_1}T^{N_1N_2}]$, where $\{\bar{\mathbf{w}}_{k'}\}_{k'\in[1:\frac{\lambda_2}{\lambda_1}T^{N_1N_2}]}$ is defined in Section \ref{subsec:scheme1}.
Therefore, $\bar{\mathbf{v}}^{(k)}_{\beta j}$ should be orthogonal with the following  vectors:
\begin{align} \label{eq:cond4}
\{\bar{\mathbf{H}}^{\dagger}_{\beta i'}\bar{\mathbf{w}}_{k'}\}_{i'\in[1:N_2],i'\neq j,k'\in[1:\frac{\lambda_2}{\lambda_1}T^{N_1N_2}]}.
\end{align}
Since there are total $\frac{\lambda_2}{\lambda_1}(N_2-1)T^{N_1 N_2}$ vectors in \eqref{eq:cond4} and $\bar{\mathbf{v}}^{(k)}_{\beta j}$ has $\frac{M_2}{\lambda_1} (T+1)^{N_1 N_2}$ elements,
we can set linearly independent $\{\bar{\mathbf{v}}^{(k)}_{\beta j}\}_{k\in[1:\frac{\lambda_2}{\lambda_1}T^{N_1N_2}]}$ orthogonal with the vectors in  \eqref{eq:cond4} for all $j\in[1:N_2]$ if
\begin{align} \label{eq:condition5}
\frac{M_2}{\lambda_1}(T+1)^{N_1 N_2}-\frac{\lambda_2}{\lambda_1}(N_2-1)T^{N_1 N_2}>\frac{\lambda_2}{\lambda_1}T^{N_1N_2},
\end{align}
which is satisfied from the assumption that $N_2\lambda_2  \leq M_2$.

Now consider the decoding procedure.
Even though inter-cell interference from BS $\beta$ is not zero-forced, BS $\alpha$ is able to decode all the intended submessages by zero-forcing if the number of dimensions occupied by all signal and interference vectors is less than or equal to $M_1 d$, i.e.,
\begin{align}
N_1 T^{N_1 N_2}+\frac{\lambda_2}{\lambda_1}N_2 T^{N_1 N_2}\leq \frac{M_1}{\lambda_1}(T+1)^{N_1 N_2},
\end{align}
which is satisfied from the assumption that $\lambda_1N_1+\lambda_2 N_2\leq M_1$.
Lastly, the condition for successful decoding at each user in cell $\beta$ is the same as  in \eqref{eq:dec_con22},
which is satisfied from the assumption that $\lambda_1+\lambda_2\leq 1$.
Therefore, the second IA--IN scheme achieves the sum DoF in \eqref{eq:maximization2}.

\section{Discussion} \label{sec:discussion}
In this section, we discuss about the cell coordination problem figuring out the DoF gain achievable by uplink--downlink operation in more details in Sections \ref{subsec:dof_gain} and \ref{subsec:dof_hetnet} and also propose a simple IA scheme exploiting delayed CSI at transmitters (CSIT) in Section \ref{subsec:delayed_csit}.

\subsection{DoF Gain From Uplink--Downlink Operation} \label{subsec:dof_gain}

In Remark 1 of Section \ref{sec:main_result}, we have briefly explained the DoF gain achievable by uplink--downlink operation compared to the conventional uplink or downlink operation.
In this subsection, we consider the \emph{cell coordination problem} in more details for a general four-parameter space $(M_1,M_2,N_1,N_2)$. Specifically, the first cell consists of the BS with $M_1$ antennas and $N_1$ users and the second cell consists of the BS with $M_2$ antennas and $N_2$ users. The operation mode of each cell can be chosen to maximize the sum DoF. 

Unfortunately, the sum DoF of the two-cell multiantenna IBC (or IMAC) is not completely characterized for a general $(M_1,M_2,N_1,N_2)$. It was shown in \cite{Park:12} that, for $\max(M_1,M_2)\geq \min(N_1,N_2)$, the sum DoF is given by 
\begin{equation}
\min\left\{N_1+N_2,\max(M_1,N_2),\max(M_2,N_1)\right\},
\end{equation}
which corresponds to the regime that zero-forcing is optimal.
For $\max(M_1,M_2)\leq \min(N_1,N_2)$, on the other hand, zero-forcing is not optimal in general and the sum DoF has been characterized only for  the symmetric case where $M_1=M_2:=M$ and $N_1=N_2:=N$.
Specifically, the sum DoF is given by $\frac{2MN}{M+N}$ if $M\leq N$ \cite{Sridharan:13}, which is achievable by treating each BS antenna as a separate user and then applying asymptotic IA proposed in \cite{Viveck2:09}.

To figure out the DoF gain from uplink--downlink operation over a four-parameter space $(M_1,M_2,N_1,N_2)$, for $\Lambda\in\mathbb{Z}_+$,
we define
\begin{align}
\delta_{\sf{gain}}(\Lambda):=\frac{\sum_{i,j,k,l\in[1:\Lambda]}1_{d_{\Sigma}(i,j,k,l)>d_{\sf upper}(i,j,k,l)}}{\Lambda^4},
\end{align}
where 
\begin{align} 
&d_{\Sigma}(i,j,k,l)\nonumber\\
&=\min\Bigg\{\frac{kl+\min(i,k)(k-l)^+ +\min(j,l)(l-k)^+}{\max (k,l)},i+l,j+k,\max(i,j),\max(k,l)\Bigg\},
\end{align}
\begin{align} 
d_{\sf upper}(i,j,k,l)=\min\{i+j,k+l,\max(i,l),\max(j,k)\},
\end{align}
and $1_{(\cdot)}$ denotes the indicator function.
Note that $d_{\Sigma}(i,j,k,l)$ is given from Theorem 1, which is the sum DoF obtained by uplink--downlink operation, and  $d_{\sf upper}(i,j,k,l)$ is an upper bound on the sum DoF obtained by the conventional uplink or downlink operation \cite{Jafar:07}.
Hence, from the definition of $\delta_{\sf gain}(\Lambda)$, uplink--downlink operation is beneficial for improving the sum DoF at least $\delta_{\sf gain}(\Lambda)$ fraction of the entire four-parameter space $(M_1,M_2,N_1,N_2)$.
Table \ref{table:gain_fraction} states $\delta_{\sf gain}(\Lambda)$ with respect to $\Lambda$.
As the space size $\Lambda$ increases, the fraction of subspace showing the DoF gain from uplink--downlink operation increases. For instance, uplink--downlink operation can improve the sum DoF more than 30 percent of the entire space when $\Lambda=32$.

\begin{table}[t!] \caption{A fraction of the four-parameter space showing the DoF gain from uplink--downlink operation.}
\label{table:gain_fraction}
\begin{center}
\begin{tabular}{c|c|c|c|c|c|c}
\hline 
$\Lambda$ & 2 & 4 & 8 & 16 & 32 & 64 \\ 
\hline 
$\delta_{\sf{gain}}(\Lambda)$ & 0.1250 & 0.2031 & 0.2598 & 0.2942 & 0.3131 & 0.3231 \\ 
\hline 
\end{tabular} 
\end{center}
\end{table}%

From Table \ref{table:solution}, which will be explained in the Appendix, we can see that except the regimes 5, 6, 9, 10, 15, 16, 19, and 20, single-cell operation achieves $d_{\Sigma}$.
Hence the same sum DoF is also achievable by either uplink or downlink operation (with single-cell operation), meaning that uplink--downlink operation cannot improve the sum DoF except for the regimes 5, 6, 9, 10, 15, 16, 19, and 20.
The numerical result in Table \ref{table:gain_fraction} demonstrates that uplink--downlink operation strictly improves the sum DoF for most of the cases in regimes 5, 6, 9, 10, 15, 16, 19, and 20, which is $8$ regimes out of $24$ regimes.

\begin{figure}[t!]
\begin{center}
\includegraphics[scale=0.7]{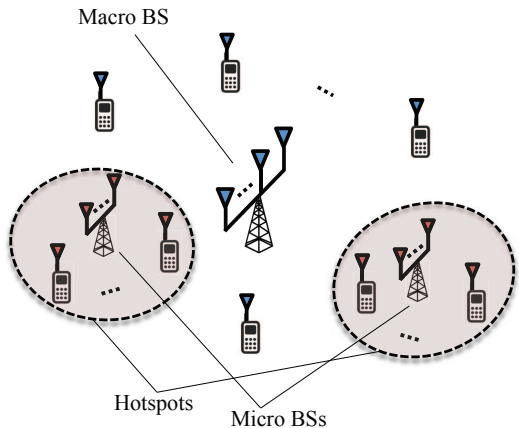}
\end{center}
\vspace{-0.15in}
\caption{Heterogeneous cellular networks having hotspots in which the users in each hotspot are served from the micro BS in the same hotspot.}
\label{figs:hotspots}
\vspace{-0.1in}
\end{figure}

\subsection{DoF of Heterogeneous Cellular Networks} \label{subsec:dof_hetnet}
Recently, heterogeneous cellular networks called ``HetNets'' have been actively studied, in which overall cellular systems consist of different types of cells with different capabilities and configurations \cite{Ghosh:12,Andrews:13,Kountouris:13,Hosseini:13,Hoydis:13}.
One crucial potential for heterogeneous cellular networks is to build so called ``hotspot'' in  the most congested areas within each cell depicted in Fig. \ref{figs:hotspots}, which is beneficial for load valencing, capacity boosting, coverage, and so on \cite{Andrews:12,Hosseini:13,Hoydis:13}.
Although there exist various reasons for considering heterogeneous cellular networks, let us focus on the DoF of heterogeneous cellular networks having hotspots in this subsection.
As shown in Fig. \ref{figs:hotspots}, consider a canonical hotspot model in which the users outside hotspots are served from a macro BS and, on the other hand, the users in each hotspot are served from the micro BS in the same hotspot.
Assume that there are $L$ hotspots in the cell.
Denote the number of antennas at each micro BS by $M_1$ and the number of antennas at the macro BS by  $M_2$.
Also denote the number of users inside each hotspot and the number of users outside hotspots by $N_1$ and $N_2$ respectively.
Each user is assumed to have a single antenna.
Let us focus on the regime that $M_1\leq M_2$ and $N_1\leq N_2$, which is reasonable in practice.

Now again consider the \emph{cell coordination problem}, i.e., how to operate or coordinate this special type of heterogeneous cellular networks in order to maximize its sum DoF.
Recall the results in Theorem \ref{thm:achievable_DoF} and Section \ref{subsec:dof_gain}, suggesting that uplink--downlink operation can improve the DoF of heterogeneous cellular networks. 
We will demonstrate that the same argument holds for the above hotspot network.

First of all, if both micro and macro cells operate as either uplink or downlink, then the sum DoF of the considered hotspot network is upper bounded by 
\begin{align} \label{eq:hotspot_upper}
\min\{L M_1+M_2,LN_1+N_2,\max(LM_1,N_2),\max(LN_1,M_2)\},
\end{align}
where we again use the two-user MIMO IC bound in \cite{Jafar:07}, which corresponds to the model allowing full cooperation between the users in all micro cells and between the micro BSs and also allowing full cooperation between the users in the macro cell.
It might be possible to obtain a tighter bound by considering different types of cooperation, but the above bound is enough to establish an example network demonstrating the DoF gain from uplink--down operation  in the following.

\begin{figure}[t!]
\begin{center}
\includegraphics[scale=0.7]{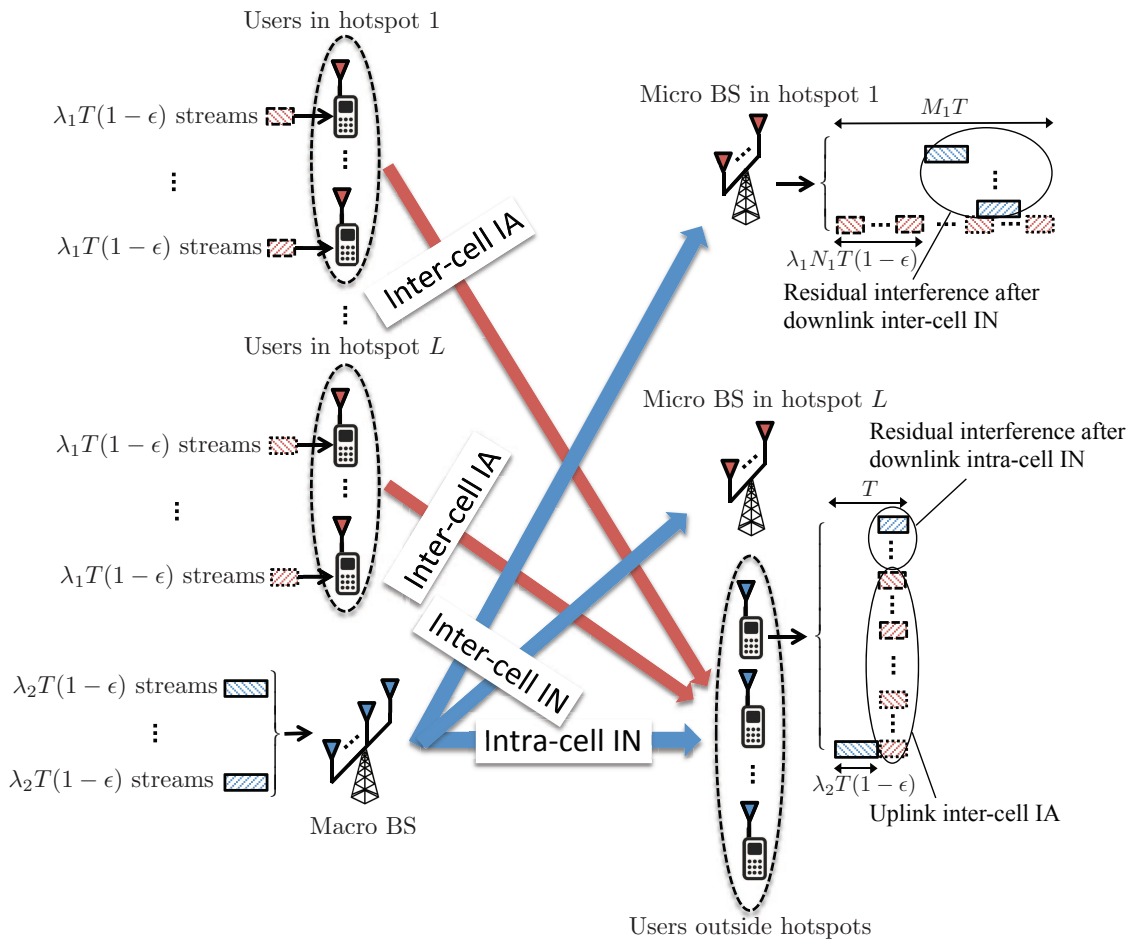}
\caption{IA--IN scheme when all micro cells operate as uplink and the macro cell operates as downlink.}
\label{figs:scheme_hotspot}
\end{center}
\end{figure}

Now operate all micro cells as uplink and the macro cell as downlink depicted in Fig \ref{figs:scheme_hotspot}.
The first IA--IN scheme in Section \ref{subsec:scheme1} can be modified for this case.
Specifically, each user in hotspots transmits $\lambda_1 T(1-\epsilon)$ streams over $T$ time-extended beamforming vectors and the marco BS transmits $\lambda_2 T(1-\epsilon)$ streams to each of the users outside hotspots over $T$ time-extended beamforming vectors, where $\lambda_1, \lambda_2\in(0,1]$ and $\epsilon>0$ is an arbitrarily small constant.
Then, uplink beamforming vectors are set to align inter-cell interference to the users in the macro cell and downlink beamforming vectors are set to null out both inter-cell interference to the micro BSs and intra-cell interference to its serving users.
As seen in Fig. \ref{figs:scheme_hotspot}, each micro BS is able to decode its intended streams almost surely achieving one DoF for each stream by zero-forcing. 
Similarly, each user in the macro cell is able to decode its intended streams almost surely achieving one DoF for each stream by zero-forcing. 
Therefore, as $T$ increases, the following sum DoF is achievable:
\begin{align} \label{eq:hotspot_lower}
\max_{\substack{
            \lambda_1+\lambda_2\leq 1\\
            LN_1\lambda_1\leq M_1\\
            LN_1\lambda_1+N_2\lambda_2\leq M_2}}\{LN_1\lambda_1+N_2\lambda_2\}.
\end{align}

Notice that the above optimization is the same form as in \eqref{eq:maximization1} except that $LN_1$ appears in the object function and the constraints instead of $N_1$. 
Hence we can find the solution of \eqref{eq:hotspot_lower} from $d_{\Sigma,1}$ in Table \ref{table:solution} by substituting $N_1$ with $LN_1$. 

\begin{remark}[Cooperation Between Micro BSs]
If we assume full cooperation between $L$ micro BSs, \eqref{eq:hotspot_lower} is immediately obtained from \eqref{eq:maximization1}. The IA--IN scheme in Fig. \ref{figs:scheme_hotspot} shows that the same sum DoF in \eqref{eq:hotspot_lower} is achievable without joint process sharing their received signals between $L$ micro BSs.  \hfill$\lozenge$
\end{remark}

We can easily find an example that \eqref{eq:hotspot_lower} is strictly greater than \eqref{eq:hotspot_upper}.
For instance, consider the case where $L=2$, $M_1=2$, $M_2=6$, $N_1=3$, and $N_2=4$.
Then, the sum DoFs in \eqref{eq:hotspot_upper} and \eqref{eq:hotspot_lower} are given by $4$ and $\frac{14}{3}$ respectively. 
That is, if we operate this example hotspot network as the conventional downlink, the sum DoF is limited by $4$, which is achievable by only activating the marco cell (The same argument holds for the conventional uplink).
Whereas, if we change the micro cells as uplink, then the sum DoF is improved to $\frac{14}{3}$.
This example suggests that introducing hotspots can improve the sum DoF of cellular networks, but we have to be careful on how to operate or coordinate these heterogeneous cells.

\begin{figure}[t!]
\begin{center}
\includegraphics[scale=0.7]{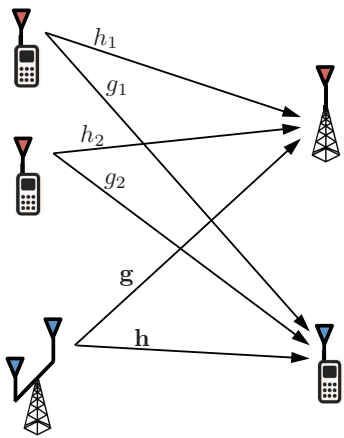}
\end{center}
\vspace{-0.15in}
\caption{Uplink--downlink multiantenna two-cell cellular networks when $M_1=N_2=1$ and $M_2=N_1=2$.}
\label{fig:delayed_csit}
\vspace{-0.1in}
\end{figure}

\subsection{Uplink--Downlink IA With Delayed CSIT} \label{subsec:delayed_csit}
One of the main barriers for implementing IA is for acquiring instantaneous CSI at each transmitter, which is in practice hard to acquire due to the channel feedback delay.
To overcome such limitation of IA using instantaneous CSI, IA using delayed or outdated CSI has been recently studied in the literature \cite{Maddah-Ali:12,Vaze:12,Gou2:12}.
It was originally shown in \cite{Maddah-Ali:12} that completely outdated CSI is still useful for improving DoF of the multiantenna broadcast channel. 
Specifically, delayed CSI was used to align interference in order to exploit received interfering signals as side information. 
The same approach can be applied for uplink--downlink multiantenna two-cell cellular networks.

Consider an example network depicted in Fig. \ref{fig:delayed_csit}, which corresponds to the case where $M_1=N_2=1$ and $M_2=N_1=2$ in Fig \ref{figs:general_i_bc_mac}.
For notational simplicity, we redefine channel coefficients as in Fig. \ref{fig:delayed_csit}.
Let us assume that the users in cell $\alpha$ and BS $\beta$ only knows delayed CSI, i.e., CSI up to time $t-1$ for the transmission at time $t$.
We will show that the sum DoF $\frac{5}{4}$ is achievable using delayed CSI.
Communication takes place over a block of $4$ time slots.
During the transmission block, user $(\alpha, 1)$ transmits two streams $a_1$ and $a_2$, user $(\alpha,2)$ transmits one stream $b_1$, and BS $\beta$ transmits two steams $c_1$ and $c_2$ as follows:
\begin{itemize}
\item At the first time, user $(\alpha,1)$ transmits $a_1$ and user $(\alpha,2)$ transmits $b_1$.
\item At the second time, user $(\alpha,1)$ transmits $a_2$ and user $(\alpha,2)$ transmits $b_1$. 
\item At the third time, BS $\beta$ transmits $[c_1,c_2]^{\dagger}$.
\end{itemize}
Then the received signals of BS $\alpha$ at time $1$, $2$, and $3$ are given by
\begin{align} \label{eq:signal_13_alpha}
h_1[1]a_1+h_2[1]b_1:=L_1(a_1,b_1),\nonumber\\
h_1[2]a_2+h_2[2]b_1:=L_2(a_2,b_1),\nonumber\\
\mathbf{g}[3][c_1,c_2]^{\dagger}:=L_3(c_1,c_2),
\end{align}
respectively, where we omit additive noises in the input--output relation.
Similarly, the received signals of the user in cell $\beta$ at time $1$, $2$, and $3$ are given by
\begin{align}  \label{eq:signal_13_beta}
g_1[1]a_1+g_2[1]b_1:=L_4(a_1,b_1),\nonumber\\
g_1[2]a_2+g_2[2]b_1:=L_5(a_2,b_1),\nonumber\\
\mathbf{h}[3][c_1,c_2]^{\dagger}:=L_6(c_1,c_2).
\end{align}
Then, BS $\alpha$ can decode $a_1$, $a_2$, and $b_1$ if it obtains a linear combination of $(a_1,a_2,a_3)$, linearly independent of $L_1(a_1,b_1)$ and $L_2(a_2,b_1)$, and the user in cell $\beta$ can decode $c_1$ and $c_2$ if it obtains a linear combination of $(c_1,c_2)$, linearly independent of $L_6(c_1,c_2)$. This is possible by transmitting at the fourth time as follows:
\begin{itemize}
\item At the fourth time, user $(\alpha,1)$ transmits $L_7(a_1,a_2)$ and BS $\beta$ transmits $[L_3(c_1,c_2),L_3(c_1,c_2)]^{\dagger}$, where $L_7(a_1,a_2)$ is given by $g_1[1]a_1-\frac{g_2[1]g_1[2]}{g_2[2]}a_2$.
\end{itemize}
Note that user $(\alpha,1)$ can construct $L_7(a_1,a_2)$ and BS $\beta$ can construct $L_3(c_1,c_2)$ using delayed CSI.

The received signal of BS $\alpha$ at time $4$ is given by
\begin{align} \label{eq:signal_4_alpha}
h_1[4]L_7(a_1,a_2)+\mathbf{g}[4][1,1]^{\dagger}L_3(c_1,c_2).
\end{align}
Therefore, by subtracting the effect of $L_3(c_1,c_2)$ from \eqref{eq:signal_4_alpha}, which was received at time $3$, BS $\alpha$ is able to obtain $L_7(a_1,a_2)$ and, as a result, decode $a_1$, $a_2$, $a_3$ from $L_1(a_1,b_1)$, $L_2(a_2,b_1)$, and $L_7(a_1,a_2)$.
The received signal of the user in cell $\beta$ at time $4$ is given by
\begin{align} \label{eq:signal_4_beta}
\mathbf{h}[4][1,1]^{\dagger}L_3(c_1,c_2)+g_1[4]L_7(a_1,a_2).
\end{align}
Hence the user in cell $\beta$ first constructs $L_7(a_1,a_2)=L_4(a_1,b_1)-\frac{g_2[1]}{g_2[2]}L_5(a_2,b_1)$ from $L_4(a_1,b_1)$ and $L_5(a_2,b_1)$, each of which was received at time $1$ and $2$.
Then it subtracts the effect of  $L_7(a_1,a_2)$ from \eqref{eq:signal_4_beta} and, as a result, decode $c_1$ and $c_2$ from $L_3(c_1,c_2)$ and $L_6(c_1,c_2)$.
In conclusion, the sum DoF $\frac{5}{4}$ is achievable and this example demonstrates that delayed CSIT is still useful for  uplink--downlink multiantenna two-cell cellular networks.
Furthermore, if we operate the above example network as the conventional uplink or downlink, then the sum DoF is limited by one even with instantaneous CSIT from the result in \cite{Jafar:07}.
Therefore, it also shows that uplink--downlink operation can improve the sum DoF than the conventional uplink or downlink under the delayed CSIT model. 

\section{Concluding Remarks} \label{sec:conclusion}
In this paper, the sum DoF of uplink--downlink multiantenna two-cell cellular networks has been characterized.
The result demonstrates that, for a broad class of network configurations, uplink--downlink operation can strictly enlarge the sum DoF of multiantenna two-cell cellular networks compared to the conventional uplink or downlink operation.
This DoF improvement basically comes from heterogeneous network environment, especially when the number of antennas at each BS is different from each other.
Recently, for various reasons such as capacity, coverage, load valancing, and so on, heterogeneous cellular networks called ``HetNet'' have been actively studied both in academia and industry.
Therefore we should be more careful for operating such heterogeneous cellular networks consisting of macro BSs with a larger number of antennas and micro BSs with a smaller number of antennas.

\section*{Appendix\\ Optimal $(\lambda_1,\lambda_2)$ and $\max(d_{\Sigma,1}, d_{\Sigma,2})$} \label{APP:linear_programing}
In this appendix, we prove Lemma  \ref{lemma:optimal_sol}.
Recall that
\begin{align} \label{ex:d_sigma1}
d_{\Sigma, 1}=\max_{\substack{
            \lambda_1+\lambda_2\leq 1\\
            N_1\lambda_1\leq M_1\\
            N_1\lambda_1+N_2\lambda_2\leq M_2}}\{N_1\lambda_1+N_2\lambda_2\}
\end{align}
and
\begin{align} \label{ex:d_sigma2}
d_{\Sigma, 2}=\max_{\substack{
            \lambda_1+\lambda_2\leq 1\\
            N_1\lambda_1+N_2\lambda_2\leq M_1\\
            N_2\lambda_2\leq M_2}}\{N_1\lambda_1+N_2\lambda_2\}.
\end{align}

\begin{table}[t!] 
\caption{For given $M_1$, $M_2$, $N_1$, and $N_2$,  $d_{\Sigma, 1}$, $d_{\Sigma, 2}$, and $\max(d_{\Sigma, 1},d_{\Sigma, 2})$.}
\begin{center}
\begin{tabular}{c|c|c|c}  \label{table:solution}
 Case& $d_{\Sigma, 1}$& $d_{\Sigma, 2}$ & $\max(d_{\Sigma, 1},d_{\Sigma, 2})$\\
\hline\hline
1: $M_1\leq M_2\leq N_1\leq N_2$ & $M_2$ & $M_1$ & $M_2$ \\
\hline
2: $M_1\leq M_2\leq N_2\leq N_1$ & $M_2$ & $M_1$ & $M_2$ \\
\hline
3: $M_1\leq N_1\leq M_2\leq N_2$ & $M_2$ & $M_1$ & $M_2$ \\
\hline
4: $M_1\leq N_1\leq N_2\leq M_2$ & $N_2$ & $M_1$ & $N_2$ \\
\hline
5: $M_1\leq N_2\leq M_2\leq N_1$ & $\min(M_2,\frac{N_1 N_2+M_1(N_1-N_2)}{N_1})$ & $M_1$ & $\min(M_2,\frac{N_1 N_2+M_1(N_1-N_2)}{N_1})$ \\
\hline
6: $M_1\leq N_2\leq N_1\leq M_2$ & $\frac{N_1 N_2+M_1(N_1-N_2)}{N_1}$ & $M_1$ & $\frac{N_1 N_2+M_1(N_1-N_2)}{N_1}$ \\
\hline
7: $M_2\leq M_1\leq N_1\leq N_2$ & $M_2$ & $M_1$ & $M_1$ \\
\hline
8: $M_2\leq M_1\leq N_2\leq N_1$ & $M_2$ & $M_1$ & $M_1$ \\
\hline
9: $M_2\leq N_1\leq M_1\leq N_2$ & $M_2$ & $\min(M_1,\frac{N_1N_2+M_2(N_2-N_1)}{N_2})$ & $\min(M_1,\frac{N_1N_2+M_2(N_2-N_1)}{N_2})$\\
\hline
10: $M_2\leq N_1\leq N_2\leq M_1$ & $M_2$ & $\frac{N_1N_2+M_2(N_2-N_1)}{N_2}$ & $\frac{N_1N_2+M_2(N_2-N_1)}{N_2}$ \\
\hline
11: $M_2\leq N_2\leq M_1\leq N_1$ & $M_2$ & $M_1$ & $M_1$ \\
\hline
12: $M_2\leq N_2\leq N_1\leq M_1$ & $M_2$ & $N_1$ & $N_1$ \\
\hline
13: $N_1\leq M_1\leq M_2\leq N_2$ & $M_2$ & $M_1$ & $M_2$ \\
\hline
14: $N_1\leq M_1\leq N_2\leq M_2$ & $N_2$ & $N_2$ & $N_2$ \\
\hline
15: $N_1\leq M_2\leq M_1\leq N_2$ & $M_2$ & $\min(M_1,\frac{N_1 N_2+M_2 (N_2-N_1)}{N_2})$ & $\min(M_1,\frac{N_1 N_2+M_2 (N_2-N_1)}{N_2})$\\
\hline
16: $N_1\leq M_2\leq N_2\leq M_1$ & $M_2$ & $\frac{N_1 N_2+M_2 (N_2-N_1)}{N_2}$ & $\frac{N_1 N_2+M_2 (N_2-N_1)}{N_2}$ \\
\hline
17: $N_1\leq N_2\leq M_1\leq M_2$ & $N_2$ & $N_2$ & $N_2$ \\
\hline
18: $N_1\leq N_2\leq M_2\leq M_1$ & $N_2$ & $N_2$ & $N_2$ \\
\hline
19: $N_2\leq M_1\leq M_2\leq N_1$ & $\min(M_2,\frac{N_1 N_2+M_1(N_1-N_2)}{N_1})$ & $M_1$ & $\min(M_2,\frac{N_1 N_2+M_1(N_1-N_2)}{N_1})$ \\
\hline
20: $N_2\leq M_1\leq N_1\leq M_2$ & $\frac{N_1 N_2+M_1(N_1-N_2)}{N_1}$ & $M_1$ & $\frac{N_1 N_2+M_1(N_1-N_2)}{N_1}$ \\
\hline
21: $N_2\leq M_2\leq M_1\leq N_1$ & $M_2$ & $M_1$ & $M_1$ \\
\hline
22: $N_2\leq M_2\leq N_1\leq M_1$ & $M_2$ & $N_1$ & $N_1$ \\
\hline
23: $N_2\leq N_1\leq M_1\leq M_2$ & $N_1$ & $N_1$ &  $N_1$ \\
\hline
24: $N_2\leq N_1\leq M_2\leq M_1$ & $N_1$ & $N_1$ & $N_1$ \\
\hline
\end{tabular}
\end{center}
\end{table}%

Depending on the relationship between $M_1$, $M_2$, $N_1$, and $N_2$, the solutions of the above two linear programs are represented as in different forms.
Hence we first divide the entire four-parameter space $(M_1,M_2,N_1,N_2)$ into $24$ regimes as shown in Table \ref{table:solution}.\footnote{For simplicity, we allow some overlap between regimes.}
\begin{itemize}
\item Identify a feasible region of $(\lambda_1,\lambda_2)$ for \eqref{ex:d_sigma1}, i.e., the region of $(\lambda_1,\lambda_2)$ satisfying three constraints in \eqref{ex:d_sigma1}.
\item Find $(\lambda_1,\lambda_2)$ maximizing the objective function $N_1\lambda_1+N_2\lambda_2$ among the corner points in the feasible region, which provides $d_{\Sigma,1}$.\footnote{One of the corner points is the solution of a linear program.}
\item Repeat the above two steps for \eqref{ex:d_sigma2}, which provides $d_{\Sigma,2}$.
\item Find $\max(d_{\Sigma,1},d_{\Sigma,2})$.
\end{itemize}

\begin{figure}[t!]
\begin{center}
\includegraphics[scale=1]{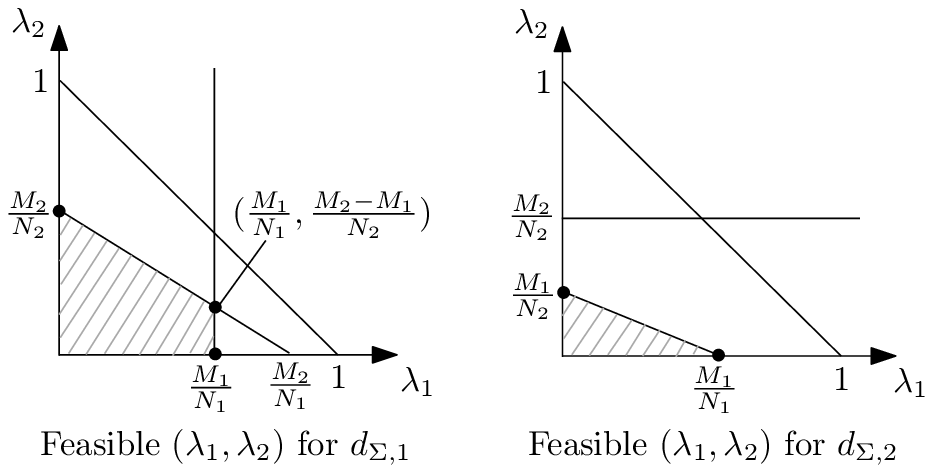}
\end{center}
\vspace{-0.15in}
\caption{Feasible regions of $(\lambda_1,\lambda_2)$ and the corresponding corner points when $M_1\leq M_2\leq N_1\leq N_2$.}
\label{fig:feasible}
\vspace{-0.1in}
\end{figure}

For instance, consider the first regime where $M_1\leq M_2\leq N_1\leq N_2$ in Table \ref{table:solution}. 
Figure \ref{fig:feasible} plots the feasible $(\lambda_1, \lambda_2)$ regions in \eqref{ex:d_sigma1} and \eqref{ex:d_sigma2} for this regime.
For \eqref{ex:d_sigma1}, the first constraint $\lambda_1+\lambda_2\leq 1$ becomes inactive and thus at least one of the three corner points yields the maximum of $N_1\lambda_1+N_2\lambda_2$, which gives $d_{\Sigma,1}=M_2$ when $(\lambda_1,\lambda_2)=(\frac{M_1}{N_1},\frac{M_2-M_1}{N_2})$.
For \eqref{ex:d_sigma2}, on the other hand, only the second constraint $N_1\lambda_1+N_2\lambda_2\leq M_1$ becomes active and at least one of the two corner points yields the maximum, which gives $d_{\Sigma,2}=M_1$ when $(\lambda_1,\lambda_2)=(\frac{M_1}{N_1},0)$ or $(\lambda_1,\lambda_2)=(0,\frac{M_1}{N_2})$.
Hence $\max(d_{\Sigma,1},d_{\Sigma,2})=M_2$ when $M_1\leq M_2\leq N_1\leq N_2$.
In the same manner, we can derive $d_{\Sigma, 1}$ and $d_{\Sigma, 2}$, and $\max(d_{\Sigma, 1},d_{\Sigma, 2})$ for the rest of the regimes in Table \ref{table:solution}.

From Table \ref{table:solution}, $\max(d_{\Sigma,1},d_{\Sigma,2})=d_{\Sigma,1}$ if $M_1\leq M_2$ and $\max(d_{\Sigma,1},d_{\Sigma,2})=d_{\Sigma,2}$ if $M_2\leq M_1$.
Furthermore, $\max(d_{\Sigma,1},d_{\Sigma,2})$ in Table \ref{table:solution} coincides with \eqref{eqn:achievable_DoF} in Theorem \ref{thm:achievable_DoF} for all the regimes. For the regime where $M_1\leq M_2\leq N_1\leq N_2$, for instance, \eqref{eqn:achievable_DoF} is given by
\begin{align}
d_{\Sigma}&=\min\left\{\frac{N_1N_2+M_2(N_2-N_1)}{N_2},M_1+N_2,M_2+N_1,M_2,N_2\right\}\nonumber\\
&=\min\left\{\frac{N_1N_2+M_2(N_2-N_1)}{N_2},M_2\right\}\nonumber\\
&=M_2,
\end{align}
where the second equality follows since $\frac{N_1N_2+M_2(N_2-N_1)}{N_2}=M_2+\frac{N_1(N_2-M_2)}{N_2}\geq M_2$.
In a similar manner, we can prove that $\max(d_{\Sigma,1},d_{\Sigma,2})=d_{\Sigma}$ for the rest of the regimes.
In conclusion, 
\begin{align} 
&\max(d_{\Sigma,1},d_{\Sigma,2})=d_{\Sigma,1}= d_{\Sigma} \mbox{ if }M_1\leq M_2,\nonumber\\
&\max(d_{\Sigma,1},d_{\Sigma,2})=d_{\Sigma,2}= d_{\Sigma} \mbox{ if }M_2\leq M_1,
\end{align}
which completes the proof.


\end{document}